\crefname{figure}{Fig.}{Figs.}%
\def\BState{\State\hskip-\ALG@thistlm}
\DeclareMathOperator{\Exp}{\mathbb{E}}
\DeclareMathAlphabet{\mathbit}{OML}{cmr}{bx}{it}
\newcommand{\B}[1]{\mathbf{#1}}
\newcommand{\positiontextbox}[4][]{%
	\begin{tikzpicture}[remember picture,overlay]
		\node[inner sep=3pt, fill=yellow,align=left,draw,line width=1pt,#1] at ($(current page.north west) + (#2,-#3)$) {\parbox{.95\paperwidth}{#4}};
	\end{tikzpicture}%
}
\def\B{\mathbf}
\acrodef{OPTA}{optimum performance theoretically attainable}
\acrodef{5G}{fifth generation}
\acrodef{4G}{fourth generation}
\acrodef{MU}{multiuser}
\acrodef{eMBB}{Enhanced  Mobile  Broadband}
\acrodef{LTE}{Long Term Evolution}
\acrodef{EM}{electromagnetic}
\acrodef{OPTA}{Optimum Performance Theoretically
Attainable}
\acrodef{WSN}{wireless sensor network}
\acrodef{URLLC}{ultra-reliable  and  low latency  communications}
\acrodef{mMTC}{Massive Machine-Type Communications}
\acrodef{IoT}{Cellular Internet of Things}
\acrodef{V2X}{Vehicle-to-Everything}
\acrodef{MIMO}{multiple-input multiple-output}
\acrodef{MISO}{multiple-input single-output}
\acrodef{SISO}{single-input single-output}
\acrodef{B5G}{beyond 5G}
\acrodef{IRS}{intelligent reflecting surface}
\acrodef{LoS}{line-of-sight}
\acrodef{MU-MIMO}{multi-sensor MIMO}
\acrodef{EE}{energy efficiency}
\acrodef{MS}{multisensor}
\acrodef{SE}{spectrum efficiency}
\acrodef{LoS}{line-of-sight}
\acrodef{LBG}{Linde-Buzo-Gray}
\acrodef{NLoS}{non line-of-sight}
\acrodef{BS}{base station}
\acrodef{MEC}{mobile edge computing}
\acrodef{IoT}{Internet of Things}
\acrodef{6G}{sixth generation}
\acrodef{SNR}{signal-to-noise ratio}
\acrodef{DL}{deep learning}
\acrodef{ANN}{artificial neural network}
\acrodef{DRL}{deep reinforcement learning}
\acrodef{RL}{reinforcement learning}
\acrodef{DDPG}{deep deterministic policy gradient}
\acrodef{DQL}{Deep Q-learning}
\acrodef{PG}{policy gradient} 
\acrodef{SU-MISO}{single-sensor MISO}
\acrodef{MU-MISO}{multi-sensor multiple-input single-output}
\acrodef{CSI}{channel state information}
\acrodef{RF}{radio frequency} 
\acrodef{MAP}{Maximum A Posteriori}
\acrodef{AWGN}{additive white Gaussian noise}
\acrodef{MMSE}{minimum mean square error}
\acrodef{DPG}{deterministic policy gradient}
\acrodef{ReLU}{rectified linear unit}
\acrodef{tanh}{hyperbolic tangent}
\acrodef{SS}{single-stream}
\acrodef{DoF}{degrees of freedom} 
\acrodef{UAV}{unmanned aerial vehicle}
\acrodef{MRT}{maximum ratio transmitter}
\acrodef{TD3PG}{twin-delayed DDPG}
\acrodef{SIMO}{single-input and multiple-output}
\acrodef{SDR}{signal-to-distortion ratio}
\acrodef{JSCC}{joint source-channel coding}
\acrodef{MAP}{maximum a posteriori}
\acrodef{DQLC}{distributed quantizer linear coding}
\acrodef{MAC}{multiple access channel}
\acrodef{pdf}{probability density function}
\acrodef{MSE}{mean squared error}
\newtheorem{theorem}{Theorem}[section]
\newtheorem{lemma}[theorem]{Lemma}
\begin{document}

	\onecolumn
	\begingroup
	
	\setlength\parindent{0pt}
	\fontsize{14}{14}\selectfont
	
	\vspace{1cm} 
	\textbf{This is an ACCEPTED VERSION of the following published document:}
	
	\vspace{1cm} 
	P. Suárez-Casal, O. Fresnedo, D. Pérez-Adán and L. Castedo,``Lattice-Based Analog Mappings for Low-Latency Wireless Sensor Networks'', \textit{IEEE Internet of Things Journal}, vol. 10, n.o 19, pp. 17137-17154, Oct 2023, doi: 10.1109/JIOT.2023.3273194
	
	\vspace{1cm} 
	Link to published version: https://doi.org/10.1109/JIOT.2023.3273194
	
	\vspace{3cm}
	
	\textbf{General rights:}
	
	\vspace{1cm} 
	\textcopyright 2023 IEEE. This version of the article has been accepted for publication, after peer review. {Personal use of this material is permitted. Permission from IEEE must be obtained for all other uses, in any current or future media, including reprinting/republishing this material for advertising or promotional purposes, creating new collective works, for resale or redistribution to servers or lists, or reuse of any copyrighted component of this work in other works.}
	\twocolumn
	\endgroup
	\clearpage

\title{Lattice-Based Analog Mappings for Low Latency Wireless Sensor Networks}

\author{ Pedro Suárez-Casal, Óscar Fresnedo,~\IEEEmembership{Member,~IEEE,} Darian Pérez-Adán,~\IEEEmembership{Student Member,~IEEE,} and Luis Castedo,~\IEEEmembership{Senior Member,~IEEE}
\thanks{Pedro Suárez-Casal,  Óscar Fresnedo, Darian Pérez-Adán and Luis Castedo are with the Department of Computer Engineering, University of A Coruña, CITIC, Spain, e-mail: \{pedro.scasal, oscar.fresnedo, d.adan, luis\}@udc.es.}

}
\markboth{Draft}%
{IEEE Internet of Things Journal}

\maketitle

\begin{abstract}
We consider the transmission of spatially correlated analog information in a \ac{WSN} through fading \ac{SIMO} \acp{MAC} with low latency requirements. A lattice-based analog \ac{JSCC} approach is considered where vectors of consecutive source symbols are encoded at each sensor using an $n$-dimensional lattice and then transmitted to a multi-antenna central node. We derive a \ac{MMSE} decoder that accounts for both the multi-dimensional structure of the encoding lattices and the spatial correlation. In addition, a sphere decoder is considered to simplify the required searches over the multi-dimensional lattices. Different lattice-based mappings are approached and the impact of their size and density on performance and latency is analyzed. Results show that, while meeting low-latency constraints, lattice-based analog \ac{JSCC} provides performance gains and higher reliability with respect to the state-of-the-art \ac{JSCC} schemes.

\end{abstract}

\begin{IEEEkeywords}
Wireless sensor networks, source-channel coding, low latency transmission, lattices, MMSE estimation, source correlation.
\end{IEEEkeywords}

\acresetall

\positiontextbox{10.8cm}{26.9cm}{\footnotesize \textcopyright 2023 IEEE. This version of the article has been accepted for publication, after peer review. Personal use of this material is permitted. Permission from IEEE must be obtained for all other uses, in any current or future media, including reprinting/republishing this material for advertising or promotional purposes, creating new collective works, for resale or redistribution to servers or lists, or reuse of any copyrighted component of this work in other works. Published version:
	https://doi.org/10.1109/JIOT.2023.3273194}

\section{Introduction}
\IEEEPARstart{T}{he} transmission of correlated information over fading \ac{SIMO} \ac{MAC} is a relevant problem in wireless communications, which is helpful to model many practical situations in \acp{WSN}, \acp{UAV}, \ac{IoT}, etc. Some mission-critical applications such as driverless vehicles, drone-based deliveries, factory automation, or artificial intelligence-based personalized assistants require uninterrupted and robust data exchange, i.e., \ac{URLLC}. Hence, efficient transmission schemes with short-packet data have to be implemented to meet both the reliability and latency requirements of wireless communication networks.

Conventional digital communications perform separately the optimization of source and channel coding according to the separation principle \cite{shannon2001mathematical}, and they are particularly well suited for high data rate applications by using long-size codewords.
However, this strategy suffers from many practical issues due to its high complexity and significant delay, as well as the need to optimize the encoders for given channel conditions, which leads to the requirement of accurately tracking the wireless channel and an adaptive design in time-varying communications scenarios.
For this reason, traditional digital systems may not be the best solution for real-time applications or for wireless transmissions with severe constraints on the acceptable communication delay. The separation principle also leads to suboptimal solutions when transmitting correlated information over \acp{MAC} \cite{5437416}. 

In this sense, \ac{JSCC} is an alternative approach where the source and the channel encoding are performed jointly in a single step. While digital approaches to \ac{JSCC} have been considered in the literature \cite{1292031,4350226,zhu2009distributed}, we instead focus in this work on analog \ac{JSCC} since sources are discrete-time continuous-amplitude symbols in most applications related to \acp{WSN} or \ac{IoT} systems. In addition, analog \ac{JSCC} is well suited for low-latency \ac{IoT} and \acp{WSN} communications (see e.g., \cite{9728732}) due to its capacity to achieve high transmission rates with very low complexity and almost zero delay. Analog \ac{JSCC} techniques are mostly focused on transforming the continuous-amplitude source symbols directly into channel symbols by using some analog mapping based on geometric curves \cite{shannon1949communication}. These mappings have been shown to closely approach the optimal system performance when considering the compression of non-correlated sources in \ac{AWGN} channels \cite{chung2000construction,hekland2009}, fading channels \cite{oscar_letter} and multiuser schemes \cite{Fresnedo15}. 

In this work, we address the design of analog \ac{JSCC} techniques for a \ac{WSN} \ac{SIMO} \ac{MAC} scenario, where non-cooperative sensors transmit their encoded source symbols to a centralized multi-antenna receiving node over fading channels with low-latency and high-reliability requirements \cite{8819994}. References \cite{8641107,9728732} are representative works on efficient mappings to accomplish the stringent requirements of \ac{URLLC} in some \ac{IoT} applications, e.g., control plants in Industry 4.0, factory automation scenarios, etc. %
In \cite{8641107}, the authors propose a short block length digital mapping to improve the decoding performance while preserving the low latency. The authors in \cite{9728732} propose a novel analog \ac{JSCC} mapping which is well suited for wirelessly powered sensor nodes in \ac{IoT}.

Concerning analog mappings, different zero-delay \ac{JSCC} schemes (i.e., codewords of size $n=1$) have been investigated in the literature for different versions of the considered scenario. In \cite{floor15}, a zero-delay analog \ac{JSCC} mapping was proposed to transmit multivariate Gaussian sources over an \ac{AWGN} \ac{MAC}. The resulting mapping combined the use of a scalar quantizer and linear transmission which can be seen as a practical zero-delay approach to the optimal mapping for such a scenario (assuming infinite block size)\cite{lapidoth10}.  
On the other hand, modulo-like mappings were proposed for the orthogonal transmission of correlated sources in a \ac{SISO} \ac{MAC} \cite{Wernersson09,Mehmetoglu15}, and for \ac{SIMO} \ac{MAC} systems with enough degrees of freedom to exploit the source correlation at reception \cite{Suarez17}. 

In general, most works on analog \ac{JSCC} focus on zero-delay mappings due to the difficulty of designing and optimizing mappings of larger dimensions. This restriction significantly limits the practical utility of analog \ac{JSCC} with respect to traditional digital approaches which can consider different encoding block sizes. In addition, these zero-delay mappings are able to provide satisfactory performance in terms of transmission reliability, but they still remain relatively separated from the theoretical optimum performance for the considered WSN scenario. Therefore, a systematic strategy to design low-latency analog JSCC mappings for codewords of arbitrary size $n$ is fundamental to extend the utilization of analog JSCC to a large range of communication scenarios and satisfy the strict requirements of URLLC applications on transmission reliability. Moreover, a flexible implementation of analog \ac{JSCC} mappings for codeword sizes $n>1$ is timely to obtain higher system performance while keeping low latency requirements, especially due to the performance degradation of digital schemes when using very short codeword sizes \cite{polyanskiy2010channel}.

In this context, the work in \cite{5962670} represents a first attempt to construct analog \ac{JSCC} schemes with arbitrary codeword size on the basis of the lattice theory. %
The authors  consider zero-delay and non-zero delay mappings which use well-known lattices with different dimensions. 
Specifically, modulo-like mappings ($n=1$),  the $D_4$ lattice ($n=4$) \cite{conway2013sphere} and the $E_8$ lattice ($n=8$) \cite{conway2013sphere} are employed for the encoding of source symbols to be next transmitted over a fading channel. As commented in \cite{5962670}, $D_4$ and $E_8$ are the densest packing lattices for $n=4$ and $n=8$, respectively,  and hence they are suitable candidates to obtain a satisfactory system performance. %
However, the parameter optimization is based on an exhaustive search which is infeasible for large lattice dimensions. In addition, the decoding operation is performed by a three-step procedure that first estimates the symbols of the uncoded sensors and next employs such estimates as side information for the remaining sensors. Hence, this strategy does not allow to jointly exploit the source correlation. Furthermore, \cite{5962670} only considers lattice dimensions up to dimension $n=8$. 

In any way, the results in \cite{5962670} provide the intuition that the use of $n$-dimensional lattices for the design of analog \ac{JSCC} mappings is a promising strategy. A comprehensive analysis for the construction of ``good" lattices in different dimensions can be found in \cite{conway2013sphere}, particularly for the sphere packing problem. 
The list of the known densest lattices for different dimensions can also be looked up in \cite{Nebe_url}. An interesting example is the Leech lattice \cite{leech1967notes}, which is the unique densest sphere packing lattice for $n=24$ \cite{cohn2009optimality}. Unfortunately, the computational cost of encoding when using the densest lattice-based mappings exponentially increases with the codeword size $n$ because it requires to find the closest point in the $n$-dimensional lattice space \cite{1019833}.  %
Hence, alternative lattice constructions must be considered to balance the system performance and the encoding complexity as the codeword size $n$ becomes larger. 

From this perspective, an attractive type of lattices are the so-called Craig's lattices \cite{craig1978extreme}.
The mechanism to construct these lattices allows to adjust their minimal norm and thus their density for any arbitrary dimension $n = p-1$, where $p$ is a prime number \cite{conway2013sphere}. Therefore, a feasible approach for the analog encoding of the source symbols is to use Craig's lattices with a suitable density for the analog mapping. This approach reduces the computational effort required to find the closest lattice points during the encoding operation. Hence, the use of these lattices can enable the design of practical analog \ac{JSCC} schemes that achieve better reliability with reasonable block sizes for low latency applications.

Leveraging all these previous ideas, we address in this work the transmission of spatially correlated discrete-time analog sources in a \ac{WSN} by means of multi-dimensional lattice-based analog \ac{JSCC} mappings. First, blocks of $n > 1$ source symbols (or measurements) are encoded at each sensor node with an analog mapping constructed from a suitable lattice of dimension $n$ by considering the low latency requirements, i.e., we focus on small block sizes $(n)$. The resulting codewords of size $n$ are next transmitted to a centralized receiver over a fading \ac{SIMO} \ac{MAC}. At this central node, the estimates of the transmitted symbols are jointly decoded taking into account the codeword size and the spatial correlation of the source symbols. Therefore, the main contributions of this work can be summarized as:

\begin{itemize}
\item {A lattice-based analog \ac{JSCC} system is designed and optimized for the transmission of blocks of symbols with codeword size $n$ that offers system performance gains while preserving low latency requirements. The proposed design hence allows analog \ac{JSCC} techniques to be a practical alternative to conventional digital schemes for \ac{URLLC} systems. This design is sufficiently flexible, both in terms of the parameters optimization and the decoding procedure, to work with different codeword sizes and efficiently exploit the source spatial correlation.

}
\item {Craig's lattices are considered to reduce the computational cost of the analog \ac{JSCC} encoding operation for the largest codeword sizes. The possibility of adjusting the lattice density allows us to balance the trade-off between the system performance and the computational complexity. In addition, we propose an alternative construction of Craig's lattices based on using vectors of minimum norm and exploiting their polynomial nature.
}
\item A performance evaluation by means of computer simulations showing the advantages of the proposed system design and the use of Craig's lattices. In particular, performance gains are determined for scenarios with non-orthogonal configurations or moderate correlations where zero-delay mappings exhibit lower performance and thus, low reliability. In addition, the impact of the block size and the lattice density on the system performance is analyzed.
\end{itemize}
\subsection{Organization}
The remainder of this paper is structured as follows. In \Cref{Latt}, we present a brief review of some preliminary concepts corresponding to the lattice theory. In \Cref{SM}, the considered \ac{SIMO} \ac{MAC} system model for \acp{WSN} is detailed. The design of the lattice-based analog \ac{JSCC} scheme is addressed in \Cref{LBE}, where different lattice constructions and their main parameters are explained. In addition, the derivation of the optimal \ac{MMSE} estimation combined with a sphere decoder to produce the symbol estimates is also described in this section. The computer experiments to evaluate the system performance and the obtained results are discussed in \Cref{SR}. Finally, \Cref{Con} is devoted to the conclusions.
\vspace*{-0.2cm}
\subsection{Notation}
The following notation is employed: $a$ is a scalar and $\B{a}$ is a vector, $[\B{{A}}]_{i,j}$ is the entry on the $i$-th row and the $j$-th column of the matrix $\B{A}$. Transpose and conjugate transpose of $\B{A}$ are $\B{A}^{T}$ and $\B{A}^{H}$, respectively. $\parallel \mathbf{A} \parallel$ represents the $2$-norm of $\mathbf{A}$. The operations tr($\cdot$), diag$(\cdot)$,  $\lfloor \cdot \rceil$ and $\lfloor \cdot \rfloor$ stand for the trace of a matrix, the diagonal matrix with the argument in the main diagonal, the element-wise rounding and the floor operation, respectively. 
The operator $\mid \cdot \mid$ represents the absolute value for a scalar argument, the matrix determinant in case of a matrix argument, and cardinality in case of set arguments. $\Re (\cdot)$ represents the real part of a complex-valued argument.
Finally, the expectation is denoted by $\Exp [\cdot ]$ and $\otimes$ represents the Kronecker product. Table \ref{Tsym10}
summarizes the notation employed throughout this paper.
\vspace*{-0.2cm}
\begin{table}[h!]
\centering
\caption{{Notation.}}\label{Tsym10}
	\setlength{\tabcolsep}{5pt}
	\def\arraystretch{1.3}
\begin{tabular}{m{2.3cm}|m{5.8cm}}
\hline
 {\textbf{Symbol / Operator}} & {\textbf{Description}}  \\ \hline
$\left(\mathbf{\bullet }\right)^{T}$, $\left(\mathbf{\bullet}\right)^{H}$ & Transpose, conjugate transpose\\
${\parallel \mathbf{\bullet} \parallel}$, ${\parallel\mathbf{\bullet} \parallel}_{F}$ &  2-norm, Frobenius norm\\ 
$\Re(\bullet)$, $\Im(\bullet)$ &Real part, imaginary part\\ 
$\mathbb{R}$, $\mathbb{C}$ & Set of real numbers, set of complex numbers\\
$\mathbf{I}_N$,$\mathbf{0}_N$ & Identity matrix with size $N,$ all zeros matrix with size $N$\\
$\left[\mathbf{{\mathbf{A}}}\right]_{i,j}$&  Entry on the $i$-th row and the $j$-th column of $\mathbf{A}$\\
$\left[\mathbf{\mathbf{A}}\right]_{i,:}$, $\left[\mathbf{\mathbf{A}}\right]_{:,j}$    &$i$-th row of $\mathbf{A}$, $j$-th column of $\mathbf{A}$\\
tr$\left(\mathbf{A}\right)$, diag$\left(\bullet\right)$& Trace of $\mathbf{A}$, diagonal matrix with the argument in the main diagonal\\
$\lfloor \bullet \rceil$, $\lfloor \bullet \rfloor$, $\left\lceil\bullet \right\rceil$ & Element-wise rounding, floor operation, ceiling  operation\\
$\mid a \mid$, $\mid \mathbf{A} \mid$, $\mid \mathcal{A} \mid$ & Absolute value of $a$, determinant of the matrix $\mathbf{A}$, cardinally of the set $\mathcal{A}$\\
mod$\;(\mathbf{a},\textbf{b}$) & Element-wise modulo operation that returns for each vector element\\
$\mathcal{N}_{\mathbb{C}} (\boldsymbol{\mu},\B{C})$ & 
Circularly-symmetric complex normal distribution with mean $\boldsymbol{\mu}$ and covariance matrix $\mathbf{C}$\\
 $\otimes$&   Kronecker product\\
$\mathbb{E} [\bullet ]$& Statistical expectation \\
\hline
\end{tabular}
\end{table}

\section{Fundamentals of Lattices}\label{Latt}
This section introduces the theoretical fundamentals of lattices that will be used throughout this work. %
An $n$-dimensional lattice ${\varLambda}$ is defined as a discrete set of vectors in $\mathbb{R}^n$ which form a group under vector addition. These vectors will be referred to as the lattice points. 
A lattice ${\varLambda}$ will be defined by its generator matrix $\mathbf{M}=[\boldsymbol{\nu}_1,\ldots, \boldsymbol{\nu}_n]\in \mathbb{R}^{n \times n}$, where the column vectors $\boldsymbol{\nu}_j, \;j=1,\ldots, n$ are the basis vectors. Therefore, the $i$-th lattice point will be generated as 
${\mathbf{x}_{\text{l}}}_i=\mathbf{M}\mathbf{l}_i$,  $\forall i=1,\ldots, M$, where %
$\mathbf{l}_i\in\mathbb{Z}^n$ is a vector of integers and $M$ is assumed to be large enough. %
Lattice points are usually the representation points of a surrounding region having a particular shape (cubic, hexagonal, sphere, ...) in the $n$-dimensional space. A shape that allows to tile the entire $n$-dimensional space with the aggregation of the surrounding regions of all the lattice points is termed a fundamental region of the lattice.  
The Voronoi region is the fundamental region which contains all points in an $n$-dimensional euclidean space closer to its representation lattice point than to any other lattice point.
The different possibilities of partitioning or covering an $n$-dimensional euclidean space with a lattice lead to different well-known problems in algebra such as the sphere packing, the covering packing or the quantization problem.

\subsection{Sphere Packing}
Sphere packing seeks to fill the $n$-dimensional space with non-overlapping equal-sized spheres. %
Unlike cubic regions, there is always some wasted space when packing spheres.
The minimization of such wasted space is  still an unsolved problem which can be stated as follows: determine the largest number of balls with the same radius $r$ that can be packed into a largely empty $n$-dimensional region. As observed in \Cref{fig:sphere_packing}, the wasted space between spheres, also known as deep holes, corresponds to the points whose minimum distance to any point in the lattice $r_t$ is larger than the radius of the spheres, i.e., $r_t > r$. 

In the sphere packing problem, the lattice points $\varLambda = \{{\mathbf{x}_{\text{l}}}_1,{\mathbf{x}_{\text{l}}}_2,\ldots,{\mathbf{x}_{\text{l}}}_M\}$ correspond to the central points of the spheres. The lattice density is defined as the ratio between the space that is occupied by the spheres and the total volume, i.e., 
\begin{equation}\label{den}
    \Delta = \frac{V_{{1}} p_{\text{r}}^n} {V({\varLambda})},
\end{equation}
where $V_{1}$ is the volume of an $n$-dimensional sphere of radius $r=1$, $V({\varLambda})$ is the volume of the lattice ${\varLambda}$, which is determined as \cite{conway2013sphere}
\begin{equation}
    V({\varLambda})= \text{det}\left(\mathbf{M}\mathbf{M}^T\right)^{1/2},
    \label{eq:volume}
\end{equation}
and $p_\text{r}$ is the packing radius defined as half of the minimal distance between lattice points, i.e., $p_\text{r}=\frac{1}{2} \sqrt{\mu({\varLambda})}$, where 
$
    \mu({\varLambda}) = \text{min}_{i\neq j}\lbrace \parallel {\mathbf{x}_{\text{l}}}_i - {\mathbf{x}_{\text{l}}}_j\parallel^2 \rbrace, ~\forall ~{\mathbf{x}_{\text{l}}}_i,{\mathbf{x}_{\text{l}}}_j \in {\varLambda}
$
is the minimum norm of the lattice.
Another important parameter is the center density which is defined as 
\begin{equation}\label{cd}\psi=\frac{\Delta }{V_{\text{1}}} = \frac{p_{\text{r}}^n} {V({\varLambda})} .\end{equation}
Note that the center density provides a more intuitive idea of how dense a packing lattice is. For instance, in packings with unitary-radius spheres, the center density directly indicates the number of centers (lattice points) per unit volume.
\begin{figure}[htpb]
	\centering
	\includegraphics[width=0.6\columnwidth]{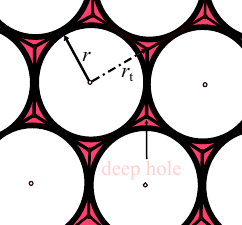}
	\caption{Circle packing problem and deep holes.}
	\label{fig:sphere_packing}\vspace{-0.4cm}
\end{figure}
 
\subsection{Lattice-Based Quantization}
In general, quantization consists in partitioning an $n$-dimensional space into $M$ non-overlapping regions each of them represented by a representation point usually termed centroid that will be interpreted as a point in a lattice. Typically, quantization regions are Voronoi regions. This way, each point in the source space is quantized to the closest point in the lattice and the quantization error is minimized.

An $n$-dimensional lattice-based quantizer comprises a lattice $\varLambda = \{{\mathbf{x}_{\text{l}}}_1,{\mathbf{x}_{\text{l}}}_2,\ldots,{\mathbf{x}_{\text{l}}}_M\} \subset \mathbb{R}^n$, defined via a generator matrix $\mathbf{M}$, and a quantization function ${Q}_{{\varLambda}}(\cdot)$ which maps any input vector $\mathbf{s} \in \mathbb{R}^n$ into the closest lattice point ${\mathbf{x}_{\text{l}}}_i \in {\varLambda}$. The quantization region associated to the $i$-th lattice point ${\mathbf{x}_{\text{l}}}_i=\mathbf{M}\mathbf{l}_i$, $\forall i=1,\ldots,M$, will be its Voronoi region defined as
\begin{equation}
    \Omega_{\varLambda}\;({\mathbf{x}_{\text{l}}}_i)=\lbrace \mathbf{s}: \Vert \mathbf{s}- {\mathbf{x}_{\text{l}}}_i \Vert \leq  \Vert \mathbf{s}- {\mathbf{x}_{\text{l}}}_j \Vert \rbrace,\; \forall i\neq j.
\end{equation}
Therefore, the quantization function is mathematically defined as 
${Q}_{{\varLambda}}(\mathbf{s}) = {\mathbf{x}_{\text{l}}}_i,  \forall \;\mathbf{s} \in \Omega_{\varLambda}\;({\mathbf{x}_{\text{l}}}_i).$

A possible metric to measure the quantization error, by considering $M$ to be a very large number, is the average \ac{MSE}, i.e.,  
\begin{equation}
    \varepsilon = \frac{1}{n} ~\sum_{i=1}^M \; \int\displaylimits_{\Omega_{\varLambda}({\mathbf{x}_{\text{l}}}_i)} \hspace*{-0.2cm}\Vert \mathbf{s} - {\mathbf{x}_{\text{l}}}_i \Vert^2~ p(\mathbf{s}) ~d\mathbf{s},
    \label{eq:quant_error}
\end{equation} 
where the factor $1/n$ is introduced for fair comparison among quantizers of different dimensions.

Although quantization and sphere packing are different lattice design problems, there is an intrinsic relation between them. A ``good'' packing lattice implies efficiently covering the region of interest with non-overlapping spheres, minimizing the deep holes, and thus minimizing the probability of having points very distant from the lattice points. Therefore, this feature apparently leads to the minimization of the quantization error in \eqref{eq:quant_error} if those ``good" packing lattices were used to solve the quantization problem. In this sense, optimal lattice-based quantizers are only known for low dimensions while sphere packing is a widely studied problem in multi-dimensional lattice theory. Indeed, the densest sphere packing lattices %
have been shown to provide satisfactory performance when used for quantization \cite{conway2013sphere}.

On the other hand, quantization with extremely dense lattices leads to a huge computational effort when considering relatively large dimensions ($n$ values above 16).
There exist several algorithms in the literature to alleviate this problem but in any case, their computational cost exponentially increases with the dimension $n$ and the lattice density. Therefore, there is a trade-off between quantization error minimization and computational complexity which can be balanced through the lattice density factor defined in \eqref{den}.

\section{System Model}\label{SM}

 \begin{figure}[t!]
	\centering
	\includegraphics[width=0.99\columnwidth]{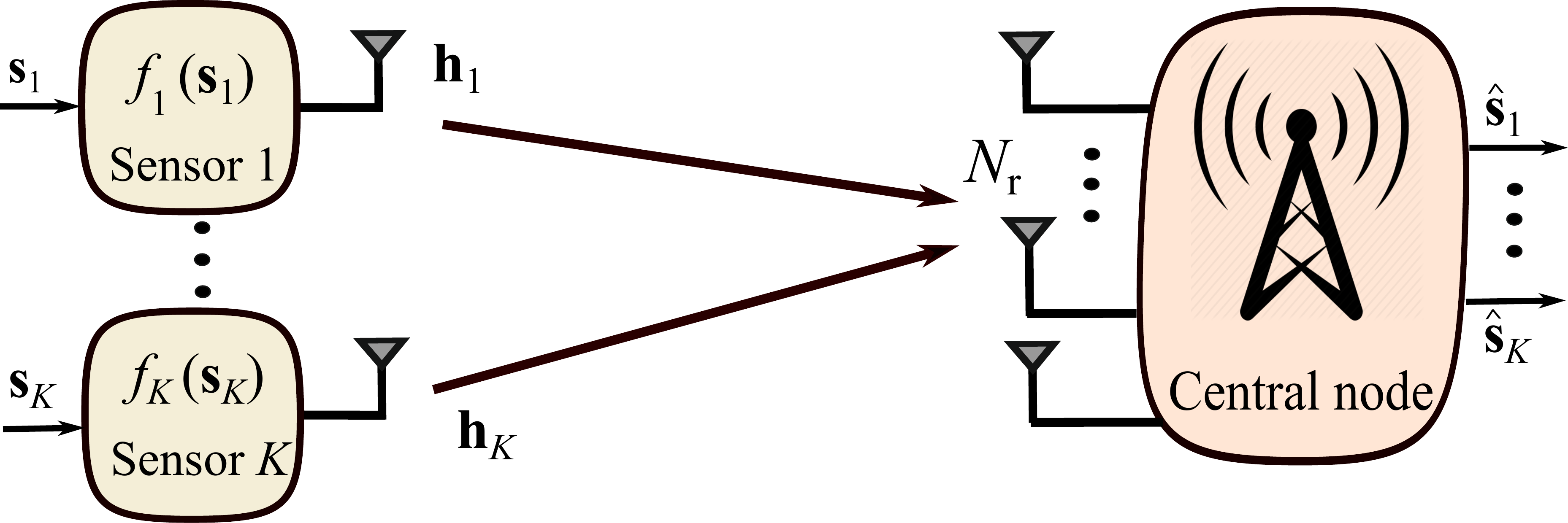}
	\caption{{Block diagram of the considered analog }\ac{JSCC}  {communication system.}}
	\label{fig:block_diagram}
\end{figure}
 Let us consider the uplink of a communication system
 {as shown in {%
\cref{fig:block_diagram}}\hspace{0pt}%
, }where $K$ single-antenna nodes transmit their source information over a fading \ac{MAC} to a %
central node with $N_{\text{r}}$ antennas. This communication system can be used to model practical scenarios of \acp{WSN} and \ac{IoT} systems.
Henceforth, we will refer to this model as a $K\times N_\text{r}$ \ac{SIMO} \ac{MAC} system. 
In this scenario, the complex-valued analog source symbol transmitted by the $k$-th {sensor } at the discrete-time instant $t$ is denoted by $s_{k,t}$. The source symbols transmitted by the $K$ {sensors } at the time instant $t$ {are } represented by the vector $\B{s}_t = [s_{1,t}, \cdots ,s_{K,t}]^T \in \mathbb{C}^K$, which is modeled as a multivariate circularly symmetric complex-valued zero-mean Gaussian distribution with covariance matrix $\B{C}_{\B{s}}= \mathbb{E}\left[\B{s}_t\B{s}_t^H\right]$. The elements $[\B{C}_{\B{s}}]_{i,j} = \rho_{i,j}$ represent the spatial correlation between the $i$-th and $j$-th source symbols of $\B{s}_t$. Without loss of generality, we assume that $\rho_{i,i} = \sigma^2, \forall i$.  We also consider that the source symbols at different time instants $t$ are statistically independent so that we only consider the  spatial correlation between the sensors. Such spatial correlation is assumed to not depend on the time instant, i.e., remains constant for a sufficiently long period of time. The \ac{pdf} of $\B{s}_t$ is therefore given by
\begin{equation}
{p_{\mathbf{s}_t}}({\B{s}}) = \frac{1}{\pi^K |\B{C}_{\B{s}}|}\exp\left(-\B{s}^H \B{C}_{{\B{s}}}^{-1} {\B{s}}\right).
\label{eq:pdf_source}
\end{equation}

A block of source symbols is individually encoded at each sensor prior to its transmission by means of an analog mapping function. We employ lattice-based mappings  %
which transform the continuous-amplitude source symbols into the complex-valued encoded symbols to be transmitted. In particular, a vector of $n/2$ consecutive complex-valued source symbols $\B{s}_k$ corresponding to the  $k$-th sensor is mapped with the function $f_k(\cdot): \mathbb{C}^{\frac{n}{2}} \rightarrow \mathbb{C}^{\frac{n}{2}}$ to produce the encoded vector
$
\B{x}_k = f_k(\B{s}_k), \forall  k=1, \ldots, K, 
$
with  $\B{s}_k = [s_{k,1}, s_{k,2}, \ldots, s_{k,\frac{n}{2}}]^T$ and $\B{x}_k = [x_{k,1}, x_{k,2}, \ldots, x_{k,\frac{n}{2}}]^T$ . The encoded vectors are next transmitted to the common central node over the fading \ac{MAC} by using $n/2$ channel uses to produce the received signal vectors $\mathbf{y}_t \in \mathbb{C}^{N_{\text{r}}}$ as
\begin{equation}
\mathbf{y}_t = \sum_{k=1}^K \B{h}_k x_{k,t}+ \B{n}_t,\quad\forall t=1,\ldots,n/2, 
\label{eq:simo_mac_model}
\end{equation}
where $\B{h}_k\in \mathbb{C}^{N_{\text{r}}}$ is the channel response from the $k$-th sensor to the central node, and $\B{n}_t = [n_{1,t}, \ldots, n_{N_{\text{r}},t}]^T \sim \mathcal{N}_{\mathbb{C}}(\B{0},\sigma_{\text{n}}^2\B{I})$ is the \ac{AWGN} component. %
The channel responses are assumed to remain the same at least during the transmission of a block of $n/2$ symbols.
 Each sensor is also subject to an individual power constraint such that $\mathbb{E}[|x_{k,t}|^2] \le {P_{\text{T}}}_k, \forall k=1,\ldots,K$. %

 \vspace{0.1cm}
 The expression in \eqref{eq:simo_mac_model} can be rewritten in a more compact way as 
\begin{align}
\B{y}_t = \B{H}\B{x}_t +\B{n}_t, \quad~\forall t=1,\ldots,n/2,
\label{eq:MAC_model_t}
\end{align}
where $\B{H}\in \mathbb{C}^{N_\text{r}\times K}$ stacks all the  channel responses, i.e., $\B{H} = [\B{h}_1, \cdots, \B{h}_K]$, and $\B{x}_t = [x_{1,t}, \ldots, x_{K,t}]^T $ contains all sensors encoded symbols at a given time instant $t$. %

\vspace{0.1cm}
At the central node, an estimate of the transmitted sensor symbols $\hat{\mathbf{s}}_t$ is obtained from the \ac{MAC} signal $\B{y}_t$ by using an appropriate decoding function that jointly decodes the symbols received during the corresponding $n/2$ channel uses. Since we are considering the analog encoding and transmission of the source information, the sensor information will always be recovered with a certain level of distortion which is measured in terms of the \ac{MSE} between the source and the estimated symbols, i.e., $\xi = \mathbb{E}[||\hat{\B{s}}_t - \B{s}_t||^2]$. In this case, the \ac{MMSE} estimator constitutes the optimal decoding strategy.
\subsection{Real-Valued Equivalent Model}\label{realimag}
In the considered system model, the variables corresponding to the source symbols, channel matrices and \ac{AWGN} components are complex-valued with uncorrelated real and imaginary parts (circularly symmetric). However, analog lattice-based mappings work in the real domain as the lattices are defined as a group in $\mathbb{R}^n$. Following the same approach as in \cite{Suarez17},
the complex-valued system model will be transformed into an equivalent real-valued one. For this, the real and imaginary parts of the different variables are separated and stacked into a unique vector with twice the size, whereas the equivalent real-valued channel matrix is rearranged as
$
\tilde{\B{H}} =  \Re(\B{H})\otimes  \B{I}_2  + \Im(\B{H})\otimes \B{E},
$
with  $\B{E} = [0 -1; 1 ~0]$. In addition, the source and noise covariance matrices must be adapted such that $\tilde{{\B{s}}}\sim \mathcal{N}_{\mathbb{R}}(\B{0}, \B{C}_{{\tilde{\text{s}}}} )$, with $\B{C}_{\tilde{\text{s}}} = \B{C}_{\B{s}}\otimes \frac{1}{2}\B{I}_2$, and $\tilde{\mathbf{n}}\sim \mathcal{N}_{\mathbb{R}}(\B{0},\frac{\sigma_{\text{n}}^2}{2}\B{I})$.
\section{Multi-Dimensional Lattice-Based Analog JSCC}\label{LBE}
Along this work, we will assume that sensors in the $K\times N_\text{r}$ \ac{SIMO} \ac{MAC} system under consideration use analog \ac{JSCC} mappings to encode their source symbols individually. As shown in \cite{Suarez17}, modulo-like mappings provide satisfactory performance when the block size is $n=1$ (i.e., zero-delay). The symbols encoded with modulo-like mappings are the difference between the source symbols and the central point of their corresponding interval. In the following, this idea is extended to the consideration of an arbitrary dimension $n > 1$

Let us consider the real-valued equivalent model in \Cref{realimag}. The parametric definition of the multi-dimensional lattice-based mapping functions is stated as
\begin{equation}
\tilde{\B{x}}_k = {f}_k(\tilde{\B{s}}_k) = \delta_k\left(\tilde{\B{s}}_k - {Q}_{{\varLambda}}(\tilde{\B{s}}_k)\right) = \delta_k(\tilde{\B{s}}_k-{\alpha}_k\B{M}\B{l}_k),
\label{eq:sensor_mapping}
\end{equation}
where $\tilde{\B{s}}_k \in \mathbb{R}^{n}$ and $\tilde{\B{x}}_k\in \mathbb{R}^{n}$ comprise the real and the imaginary parts of the $k$-th sensor source and encoded symbols, respectively. 
The operator $Q_{\varLambda}(\cdot)$ determines the lattice point (centroid) closest to its argument. Note that this quantization step depends on the considered lattice $\varLambda$ which is generated by the matrix $\B{M}$ and scaled with the parameter $\alpha_k$.
The mapping parameters $\delta_k$ and $\alpha_k$ are used to fulfill the power constraint at the sensors and to adjust the distance between any two lattice points, respectively, whereas $\B{l}_k\in \mathbb{Z}^n$ is the index vector which represents the coordinates of the specific Voronoi region where the vector of source symbols falls into during the encoding process (cf. \cite{9097147}). It is worth remarking that the same lattice ${\varLambda}$ is employed for all the $K$ sensors but with different scaling factors $\{\alpha_k,\delta_k\}$, $k=1, \ldots, K$. 

The encoded symbols are obtained by determining the difference vector between the source symbols and their corresponding centroid. In this sense, the system performance improves when the difference vectors have the smallest possible norm as long as the correct decoding of the source symbols is guaranteed. For given power constraints, this fact results in the use of larger power factors at each sensor, $\delta_k$, and thus minimizes the resulting decoding distortion. Hence, an adequate optimization of the mapping parameters $\{\alpha_k,\delta_k\}$ is essential to ensure the lattice-based system works properly.

The lattice-based mappings for all the $K$ sensor symbols can be rewritten in a compact way as
\begin{equation}
\tilde{    \mathbf{x}}_{\text{c}}=f(\tilde{\mathbf{s}}_{\text{c}})=\mathbf{D}(\tilde{\mathbf{s}}_{\text{c}}-\mathbf{B}\mathbf{l}),
\label{eq:encoded_symbols_compact}
\end{equation}
where the vector $\tilde{\mathbf{s}}_{\text{c}}\in \mathbb{R}^{nK}$ stacks the real and imaginary parts of the blocks of $n/2$ complex-valued symbols for all the $K$ sensors, i.e., $\tilde{\mathbf{s}}_{\text{c}} = [\tilde{\B{s}}_1^T, ~\tilde{\B{s}}_2^T, \ldots, ~\tilde{\B{s}}_K^T ]^T$,
$\mathbf{D}=\text{diag}(\delta_1,\ldots,\delta_K)\otimes\mathbf{I}_{n}$, whereas $\mathbf{B}=\mathbf{U}\otimes\mathbf{M}$ with $\mathbf{U}=\text{diag}(\alpha_1,\ldots,\alpha_K)$.  Note that the vector $\tilde{\mathbf{s}}_{\text{c}}$ comprises all the source symbols for the $K$ sensors and for the $n/2$ channel uses. In turn, the vector $\mathbf{l}=\left[\mathbf{l}_1^T,\ldots,\mathbf{l}_K^T\right]^T$ such that $\mathbf{l}\in\mathbb{Z}^{nK}$ stacks all the coordinates corresponding to the $K$ Voronoi regions which the $K$ vectors of source symbols fall into.

By considering the discrete nature of the lattices, the compact expression for the mapping function $f(\cdot)$ can be rewritten from an equivalent piece-wise formulation given by
     \begin{equation}\label{fpw}
       f(\mathbf{\tilde{\mathbf{s}}_{\text{c}}})=
             f_i(\tilde{\mathbf{s}}_{\text{c}})  %
            ~~\text{if} ~\tilde{\B{s}}_{{c}} \in \Omega_{\varLambda}(\B{l}_{i}),  
    \end{equation}
where
\begin{equation}
    f_i(\tilde{\mathbf{s}}_{\text{c}}) = \mathbf{D}(\tilde{\mathbf{s}}_{\text{c}}-\mathbf{B}\mathbf{l}_i).
\end{equation}
Note that the vector $\mathbf{l}_i \in \mathbb{Z}^{nK}$ identifies a specific combination of $K$ Voronoi regions, one for each sensor, which is denoted as $\Omega_{\varLambda}(\B{l}_{i})$. %
In particular, the function $f_i(\cdot)$ is defined for the combination of regions corresponding to the $K$ Voronoi regions, $\Omega_{\varLambda}(\B{l}_{k,i})$, where each sensor symbol vector $\tilde{\B{s}}_k$ falls into. Recall that $\B{l}_{k,i}$ stands for the coordinates vector corresponding to the lattice point closest to $\tilde{\B{s}}_k$ and $\Omega_{\varLambda}(\B{l}_{k,i})$ represents the corresponding Voronoi region at the $k$-th sensor. At this point, it is also worth remarking that for each feasible combination of $K$ Voronoi regions, we will have a particular function $f_i(\cdot)$ with the corresponding vector $\mathbf{l}_i$. For convenience, we will define the set $\mathcal{L}$ containing all the integer-valued vectors of dimension $nK$ which identify a feasible combination of $K$ Voronoi regions given the source distribution, i.e., $\mathbf{l}_i\in\mathcal{L}$.

The blocks of $n/2$ complex-valued encoded symbols at each {sensor } are transmitted over the \ac{MAC} by using $n/2$ channel uses. Accordingly, decoding is applied to the entire block of $n/2$ complex-valued received symbols. For that reason, we extend the compact formulation for the one-shot \ac{MAC} signal in \eqref{eq:MAC_model_t} to include all the   symbols transmitted by all the sensors during $n/2$ consecutive channel uses, and also considering the real-valued equivalent model. Hence, the compact representation of the received symbols is
\begin{align}
\tilde{\B{y}}_{\text{c}} = \tilde{\B{H}}_{\text{c}}\tilde{\B{x}}_{\text{c}} +\tilde{\B{n}}_{\text{c}},
\label{eq:y_compact}
\end{align}
where $\tilde{\B{H}}_{\text{c}}=\tilde{\B{H}}\otimes \mathbf{I}_{\frac{n}{2}}$ such that $\tilde{\B{H}}_{\text{c}}\in \mathbb{R}^{{n} N_{\text{r}}\times {n}K}$, $\tilde{\B{x}}_{\text{c}}=\left[\tilde{\B{x}}^T_1,\ldots, \tilde{\B{x}}^T_K\right]^T$ with $\tilde{\B{x}}_{\text{c}}\in \mathbb{R}^{nK}$ stacking all the $K$ blocks of $n/2$ encoded symbols, and $\tilde{\B{n}}_{\text{c}}=\left[\tilde{\B{n}}^T_1,\ldots, \tilde{\B{n}}^T_n\right]^T$ is the noise affecting the received symbols during the $n/2$ channel uses. Recall that $\tilde{\B{x}}_{\text{c}}$ is obtained according to \eqref{eq:encoded_symbols_compact}. The vector of received symbols, $\tilde{\B{y}}_{\text{c}} \in \mathbb{R}^{n\times N_r}$, is employed to produce the estimates of the $K$  blocks of transmitted symbols by using the \ac{MMSE}-based procedure to be explained in \Cref{MMSED}.

As mentioned, the encoding operation requires to find the closest lattice point for each vector of sensor symbols $\tilde{\B{s}}_k$. We employ a refined version of the Pohst's algorithm \cite{1019833}, which has been shown to be faster than other known methods like, e.g., Kannan's algorithm \cite{kannan1983improved} or the conventional Pohst's algorithm \cite{pohst1981computation}. This iterative algorithm searches for the optimal lattice point inside a hypersphere in $\mathbb{R}^n$ which should contain such a point. %
The search implies exploring all the lattice points which fall into the considered $n$-dimensional hypersphere to determine the one with minimum Euclidean distance. Therefore, their computational complexity not only grows with the lattice dimension, but also with the lattice density since the number of lattice points inside the hypersphere will be much larger. In practice, the closest point algorithms are able to efficiently deal with the densest packing lattices up to $n\approx 24$, whereas they exhibit an impractical complexity for larger dimensions. This issue will be circumvented by using Craig's lattices whose density can be properly adjusted for a given dimension. This fact allows us to increase the encoding lattice dimension at the expense of reducing the lattice density. Hence, the use of Craig's lattices can contribute to improve the system performance in terms of transmission reliability with minimum impact on the communication delay as we are still using reasonable small block sizes. 

In the following subsection, we introduce the fundamentals of Craig's lattices and present an alternative construction to reduce the computational complexity of Craig's lattice-based encoding.    
\subsection{Craig's Lattices}\label{CsL}
Craig's lattices are constructed from the ring of integers in a cyclotomic field. Let $\zeta_p$ be a primitive $p$-th root of unity being $p$ an odd prime. The elements of the ring of integers $\mathbb{Z}[\zeta_p]$ in the cyclotomic field $\mathbb{Q}[\zeta_p]$ are represented as
\begin{equation}
    \omega= P(\zeta_p) = a\zeta_p^{n-1}+,\ldots,+b\zeta_p^{}+c,
\end{equation}
where the polynomial coefficients are restricted to be integer values. In this case, $n = p-1$ determines the order of the elements in the ring and the dimension of the resulting lattices.

An $n$-dimensional Craig's lattice, usually denoted as ${A}^{(m)}_{n}$, is generated from the elements of the ideal $(1-\zeta_p)^m$, with $m$ a positive integer, in the cyclotomic ring of integers $\mathbb{Z}[\zeta_p]$  \cite{craig1978cyclotomic}. Thus, an $n$-dimensional Craig's lattice is given by the subset of polynomials in the ring of integers $\mathbb{Z}[\zeta_p]$ which are multiples of $(1-\zeta_p)^m$. 
Alternatively, the lattice points can be obtained as the vector representation of the elements of the ideal generated by $(1-x)^m$ in the quotient ring $\mathbb{Z}[x]/(x^p-1)$. Such elements can be obtained as
\begin{equation}
    P(x)(1-x)^m = Q(x)(x^p-1)+R(x),
    \label{eq:polynomial_def}
\end{equation}
for some polynomial $P(x) \in \mathbb{Z}[x]$. With this formulation, the elements of the ideal generated by $(1-x)^m$ would be the remainders $R(x)$, and the corresponding lattice points in ${A}^{(m)}_{n}$ will be the vectors that contain the coefficients of the polynomials $R(x)$. %
However, for the encoding operation, we need to obtain the generator matrix for the Craig's lattice $A^{(m)}_{n}$. In the following, we present the standard construction of this matrix and propose an alternative one based on using minimum norm vectors which reduces the computational cost of the encoding operation.

\vspace{0.0cm}
\subsubsection{Standard Construction}
The standard construction for the generator matrix of a Craig's lattice is based on the idea of constructing sub-lattices from the well-known ${A}_n$ lattice. 
In that case, given an $n$-dimensional lattice ${\varLambda}$, the difference lattice ${\boldsymbol{\Delta}}^T{\varLambda}$ satisfies ${\boldsymbol{\Delta}}^T{\varLambda} \subseteq {\varLambda}$, where
\begin{align*}\label{str}
\def\arraystretch{0.8}
\arraycolsep=1.0pt
{\boldsymbol{\Delta}}^T = \left[\begin{array}{c c c c c c }
\;1 & -1 & \;\;0 & \cdots & 0 & \;\;0 \\
\;0 & \;\;1 & -1 & \cdots & 0 & \;\;0 \\
\;\cdot & \; \cdot & \;\;\cdot &\cdots & \cdot & \;\;\cdot \\
\;0 & \;0 & \;\;0 & \cdots & 1 & -1 \\
\;1 & \;\;1 & \;\;1 & \cdots & 1 & \;\;\;2\\
\end{array}\right]
\end{align*}
 is a $n \times n$ matrix. Thereby, the generator matrix of a Craig's lattice ${A}^{(m)}_{n}$ can be defined as
\begin{equation}
    \B{M}_{n}^{(m)}=\boldsymbol{\Delta}^{m-1}\B{M}_{n}, ~~~~\forall ~m \leq n/2,
\end{equation}
where $\B{M}_{n}$ is the generator matrix for the ${A}_n$ lattice.
In practice, this procedure is equivalent to considering the following polynomial sequence 
    $\{ P_1(x) = 1,
    P_2(x) = x,
    P_3(x) =x^2,  
    \ldots, 
    P_{n}(x)=x^{n-1}\}$
in \eqref{eq:polynomial_def} to generate the $n$ basis vectors $\{\boldsymbol{\nu}_1,\ldots, \boldsymbol{\nu}_n\}$ corresponding to the $n$ columns of the generator matrix of the Craig's lattice. 
Because of the polynomial properties, this construction is in turn equivalent to setting the first column of $ \B{M}_{n}^{(m)}$ to $\boldsymbol{\nu}_1 = P_1(x)(1-x)^m$ and the remaining columns to cyclic shifts of this primary vector 
\cite[Ch. 8, Th. 10]{conway2013sphere}. However, note that the norm of $\boldsymbol{\nu}_1$ increases as $m$ becomes larger and this negatively impacts the encoding operation. 
Several works have shown that a desirable property of the generator matrices to optimize the computational effort of closest point algorithms is the fact that the scalar product of their columns is as small as possible \cite{babai1986lovasz,qiao2008integer}. Considering the particular structure of the generator matrix of a Craig's lattice, this is equivalent to obtaining a primary vector with the minimum norm. 
Inspired by this idea, we propose in the following an alternative and equivalent construction for the generator matrix of Craig's lattices based on using cyclic shifts of a primary basis vector $\boldsymbol{\nu}_1$ with the minimum norm for a given $m$.

\vspace{0.0cm}
\subsubsection{Alternative Construction}\label{sub:alter_const}

According to the lattice theory \cite{conway2013sphere}, the minimum norm of a Craig's lattice ${A}^{(m)}_{n}$ is at least $2m$. This result implies that we can always find some lattice vector with norm $2m$.
It can be observed that polynomials of the form
\begin{equation}
    T(\zeta_p) = \prod_{j\in \mathcal{J}(m)}(\zeta_p^j-1), 
\end{equation}
where $\mathcal{J}(m)$ is the subset of exponent indexes, which depends on $m$, and have the ability to produce the desired minimum norm vectors for practical values of $m$. Therefore, the elements of the Craig's lattice can be obtained by the ideal generated by $T(\zeta_p)$ instead of the standard ideal given by $(1-\zeta_p)^m$. In this case, the generator matrix with minimum norm columns can be constructed by cyclically shifting the primary vector $\boldsymbol{\nu}_1$, obtained from $T(\zeta_p)$ with \eqref{eq:polynomial_def}.

For the previous procedure to result in the same Craig's lattice as the standard construction, the ideal generated by $T(\zeta_p)$ should be equivalent to the one generated by $(1-\zeta_p)^m$. This condition is guaranteed by the following Lemma. 
\begin{lemma}
Given the cyclotomic ring of integers $\mathbb{Z}[\zeta_p]$ with $p$ prime, the ideal $G=(T_1(\zeta_p) \cdot \ldots \cdot T_m(\zeta_p))\subset \mathbb{Z}[\zeta_p]$, generated by the product of $m$ polynomials in the form $T_i(\zeta_p)=\zeta_p^{k_i}-1, k_i\ge 1$, is equal to the ideal $I=((1-\zeta_p)^m)$.
\label{lemma:equal_ideals}
\end{lemma}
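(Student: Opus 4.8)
The plan is to reduce the equality of ideals to the single observation that each factor $T_i(\zeta_p)=\zeta_p^{k_i}-1$ is an \emph{associate} of $1-\zeta_p$ in $\mathbb{Z}[\zeta_p]$, i.e.\ it differs from $1-\zeta_p$ only by a unit of the ring. (Here one must tacitly assume $p\nmid k_i$; if $p\mid k_i$ then $\zeta_p^{k_i}=1$, $T_i(\zeta_p)=0$, and the statement degenerates — in the construction of \Cref{sub:alter_const} the exponents collected in $\mathcal{J}(m)$ always lie in $\{1,\ldots,p-1\}$, so this case does not arise.) Once each factor is written as $T_i(\zeta_p)=u_i\,(1-\zeta_p)$ with $u_i\in\mathbb{Z}[\zeta_p]^{\times}$, we obtain
\begin{equation*}
T_1(\zeta_p)\cdots T_m(\zeta_p)=\Big(\prod_{i=1}^m u_i\Big)(1-\zeta_p)^m ,
\end{equation*}
and since a product of units is a unit, the principal ideals $G=(T_1(\zeta_p)\cdots T_m(\zeta_p))$ and $I=((1-\zeta_p)^m)$ coincide. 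Thus the whole lemma is a consequence of the single-factor claim.

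To establish that $\zeta_p^{k}-1$ and $\zeta_p-1$ are associates whenever $p\nmid k$, I would exhibit the relevant unit explicitly rather than invoke abstract machinery. On one side, the geometric-series identity gives
\begin{equation*}
\frac{\zeta_p^{k}-1}{\zeta_p-1}=1+\zeta_p+\zeta_p^{2}+\cdots+\zeta_p^{k-1}\in\mathbb{Z}[\zeta_p].
\end{equation*}
On the other side, because $p$ is prime and $p\nmid k$ there is an integer $k'$ with $kk'\equiv 1\pmod p$, hence $\zeta_p=\zeta_p^{kk'}=(\zeta_p^{k})^{k'}$, and therefore
\begin{equation*}
\frac{\zeta_p-1}{\zeta_p^{k}-1}=\frac{(\zeta_p^{k})^{k'}-1}{\zeta_p^{k}-1}=1+\zeta_p^{k}+\zeta_p^{2k}+\cdots+\zeta_p^{(k'-1)k}\in\mathbb{Z}[\zeta_p].
\end{equation*}
Since both $\tfrac{\zeta_p^{k}-1}{\zeta_p-1}$ and its reciprocal lie in $\mathbb{Z}[\zeta_p]$, the quotient is a unit $u\in\mathbb{Z}[\zeta_p]^{\times}$, so $\zeta_p^{k}-1=u(\zeta_p-1)=(-u)(1-\zeta_p)$ with $-u$ again a unit — the desired associate relation. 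Applying this with $k=k_i$ for each $i$ and combining as above completes the argument.

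The only genuinely non-routine step is producing the inverse of $1+\zeta_p+\cdots+\zeta_p^{k-1}$ inside the ring; the trick is to invert $k$ modulo $p$ and rewrite $\zeta_p$ as a power of $\zeta_p^{k}$, which turns the reciprocal into a manifestly integral cyclotomic expression. (Equivalently, in number-theoretic language: $\mathbb{Z}[\zeta_p]$ is a Dedekind domain in which $(1-\zeta_p)$ is the unique prime above $p$, with $(p)=(1-\zeta_p)^{p-1}$, and for $p\nmid k_i$ the element $1-\zeta_p^{k_i}$ generates this same prime ideal because $\zeta_p^{k_i}$ is again a primitive $p$-th root of unity; hence $(T_i(\zeta_p))=(1-\zeta_p)$ and $G=(1-\zeta_p)^m=I$.) I would keep the elementary, constructive computation as the main proof, since it is self-contained and immediately verifiable.
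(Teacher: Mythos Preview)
Your argument is correct and follows essentially the same route as the paper: both hinge on the fact that the cyclotomic quotient $z_k(\zeta_p)=(\zeta_p^{k}-1)/(\zeta_p-1)=1+\zeta_p+\cdots+\zeta_p^{k-1}$ is a unit in $\mathbb{Z}[\zeta_p]$, which makes each $T_i(\zeta_p)$ an associate of $1-\zeta_p$ and hence the two principal ideals equal. The paper arranges this as a double inclusion $G\subset I$ and $I\subset G$, simply \emph{asserting} that $z_k(\zeta_p)$ has a multiplicative inverse; your version is actually more complete, since you explicitly construct that inverse via $k'$ with $kk'\equiv 1\pmod p$ and the identity $\zeta_p=(\zeta_p^{k})^{k'}$, and you also make explicit the tacit hypothesis $p\nmid k_i$ (automatically satisfied in the construction because the exponents lie in $\{1,\ldots,p-1\}$) without which the statement would degenerate.
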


\begin{proof}
Since we can make the decomposition $T_i(\zeta_p) = \zeta_p^{k_i}-1 =(\zeta_p-1)\times(\zeta_p^{k_i-1}+\ldots+1)$, for any element $x\in G$ we have that $x=Q(\zeta_p)\times T_1(\zeta_p)\times\ldots \times T_m(\zeta_p)=Q(\zeta_p)\times(\zeta_p-1)^m \times(\zeta_p^{k_1-1}+\ldots+1)\times\ldots\times(\zeta_p^{k_m-1}+\ldots+1)\in I$, and hence $G\subset I$.

In addition, any element of $\mathbb{Z}[\zeta_p]$ with the form $z_{k}(\zeta_p) = (\zeta_p^k-1)/(\zeta_p-1)=\zeta_p^{k-1}+\ldots+1$ has a multiplicative inverse, and therefore for any element $x\in I$ we have that $x=Q(\zeta_p)\times(\zeta_p-1)^m=Q(\zeta_p)\times(\zeta_p-1)^m \times z_{k_1}(\zeta_p)\times z_{k_1}^{-1}(\zeta_p)\times \ldots \times z_{k_m}(\zeta_p)z_{k_m}^{-1}(\zeta_p)=Q(\zeta_p)\times T_1(\zeta_p) \times \ldots \times T_m(\zeta_p)\times z_{k_1}^{-1}(\zeta_p)\times\ldots \times z_{k_m}^{-1}(\zeta_p)\in H$. Hence $I\subset G$ and $I=G$.
\end{proof}
In practice, we have observed that the alternative construction provides the same performance as the standard construction with a significantly lower computational cost which enables the use of Craig's lattices for higher dimensions and larger values of $m$.

\vspace{0.0cm}
\subsubsection{Craig's Lattice Parameters}
In this subsection, we present important parameters related to the Craig's lattices. The determinant of the Craig's lattice ${A}^{(m)}_{n}$ is 
$
    \operatorname{det}\left({A}^{(m)}_{n}\right) = (n+1)^{2m-1},
$
where $n = p-1$, with $p$ an odd prime, and $m<n/2$ \cite[Chapter 8]{conway2013sphere}. The volume can be computed from \eqref{eq:volume} and is hence given by
\begin{equation}
     V_{\text{c}} = V\left({A}^{(m)}_{n}\right) = (n+1)^{(m-1)/2}.
    \label{eq:vol_Craig}
\end{equation}
The minimum norm for the Craig's lattice ${A}^{(m)}_{n}$ is at least $2m$, i.e., $
    \mu\left({A}^{(m)}_{n}\right)\geq 2m.
$
Therefore, the choice of $m$ directly impacts on the lattice packing radius which is given by
\begin{equation}
{p_{\text{r}}}_{\text{c}}=\frac{1}{2}\sqrt{\mu \left({A}^{(m)}_{n}\right)} \geq \sqrt{\frac{m}{2}}.
\label{eq:packing_radius_Craig}
\end{equation}
Using \eqref{eq:vol_Craig} and \eqref{eq:packing_radius_Craig}, a lower bound for the density of the Craig's lattice is given by
\begin{equation}
    \Delta_{\text{c}} = \frac{V_{{1}} {p_{\text{r}}}_{\text{c}}^n} {V_{\text{c}}}.
    \label{eq:Craig_density}
\end{equation} 
As observed, we can obtain different lattice densities depending on the parameter $m$. The lower bound given by \eqref{eq:Craig_density} is maximized for the value $m = m_0$ where
\begin{equation}
    m_0 = \left\lfloor \frac{1}{2}\;\frac{n}{\operatorname{log}_e(n+1)}\right\rceil \label{eq:m0}.
\end{equation}
This is the value of $m$ for which an $n$-dimensional Craig's lattice achieves its maximum density \cite[Chapter 8]{conway2013sphere}.
Finally, the center density (number of lattice points per volume unit) is  
\begin{equation}
\psi_{\text{c}}=\frac{\Delta_{\text{c}} }{V_{\text{1}}} = \frac{{p_{\text{r}}}_{\text{c}}^n} {V_{\text{c}}}.
\end{equation}

\Cref{tablerp} shows the base 2 logarithm of the center density for Craig's lattices having different values of $n$ and $m$. For the dimension $n=16$, the highest value for the center density is obtained when $m=3$, whereas for the dimensions $n=36$, $n=52$ and $n=60$ the densest Craig's lattices are obtained for $m=5$, $m=7$ and $m=7$, respectively. These values agree with the integers resulting from the formula in \eqref{eq:m0}. Recall that the computational cost of the closest point algorithms depends on the number of lattice points in the search region, and the parameter $\psi_{\text{c}}$ is an intuitive indicator of this number.

{{%
On the other hand, the quantization error for a given lattice is an interesting metric to predict the performance of the resulting lattice-based mappings in the considered scenario. The quantization error for some particular lattices can be determined analytically but in most cases (like for Craig's lattices), it needs to be computed numerically according to expression in \eqref{eq:quant_error}.  
\Cref{tableq} }\hspace{0pt}%
compares the quantization errors obtained for different lattice constructions. First, we have considered Craig's lattices with dimensions $n \in\{ 16, 36, 52\}$. For the dimensions $n=16$ and $n=36$, we have analyzed the densest lattice construction, i.e., the parameter $m$ is set to $m=3$ and $m=5$, respectively (see \Cref{tablerp}). For $n=52$, we have considered $m=3$, since other denser lattices for such a dimension lead to impractical computational complexity.

In addition, lattices based on the construction $A$
\cite[Chapter 4]{conway2013sphere} }
are also analyzed in \Cref{tableq}, since it  constitutes  a relatively simple lattice implementation based on the use of digital error correction codes. For this reason, we considered this approach as an interesting benchmark to show the suitability of Craig's lattices for our problem. The lattices obtained with the construction $A$ are represented as $\text{AC}[\mathcal{H}(r,n)]$, where $\mathcal{H}(r,n)$ refers to the employed digital code with source block size $r$ and codeword size $n$. Given the required small codeword sizes, we have decided to consider binary cyclic codes such that their generator polynomial guarantee the largest minimum distance for the resulting code. In particular, we have considered cyclic codes with codeword sizes $n=16$ and $n=32$, and source block length $r=5$, and $r=6$, respectively. Finally, the quantization error achieved by the well-known Leech lattice ($n=24$) {%
\cite{cohn2009optimality,conway2013sphere}}\hspace{0pt}%
,  the Barnes-Wall $(\text{BW}_{16})$ lattice ($n=16)$ {%
\cite{conway2013sphere}}\hspace{0pt}%
, and the $E_8$ lattice ($n=8$) {%
\cite{conway2013sphere,5962670}}\hspace{0pt}%
  are also included in the comparison. While the quantization error provided by this latter lattice is well-known (cf. \cite[Table 2.3]{conway2013sphere}), the error for the rest of the lattices and those obtained with the construction $A$ is numerically computed.

{As shown in \Cref{tableq}, the $A^{(3)}_{52}$ Craig's lattice leads to the lowest quantization error, whereas the $\text{AC}[\mathcal{H}(5,16)]$ construction obtains the highest quantization error. In general, lattices based on the construction $A$ clearly provide the highest quantization errors followed by the $E_8$ and the $\text{BW}_{16}$ lattices. This behavior suggests that the lattices based on construction $A$ are a counter-productive choice for the design of analog mappings in this scenario.   
On the contrary, it is interesting to remark that the $A^{(3)}_{52}$ and the $A^{(4)}_{36}$ lattices provide a slightly lower
error than that assessed with the Leech lattice. Therefore, it is reasonable to consider  Craig's lattices as suitable candidates to implement lattice-based  }\ac{JSCC} schemes.
\begin{table*}[t!]
	\centering
		\vspace{0mm}
	\caption{Center density ($\psi_{\text{c}}$) versus $m$.}
	\vspace{0mm}
	\label{tablerp}
	\setlength{\tabcolsep}{5pt}
	\def\arraystretch{1.7}
\begin{tabular}{cccccccccc}
		\hline
		Lattice $\mathbf{size\; (n)}$&
	    \textbf{$m=2$}  &
		$m=3$ &
		$m=4$ &
		$m=5$  & 
		$m=6$ &  
		$m=7$ &
		$m=8$ &
	$m=9$  &
	$m=10$ \\

				16&
	     -6.13 &
	-5.54	 &
	-6.31	 &
	-7.82	  & 
	-9.80	 &  
	 -12.11 &
	-14.66 &
		-17.38 &
		 -20.26 \\
		 
		 36&
		 	-7.81&
	     -2.49 &
	-0.23	 &
	0.35	 &
	-0.12	  & 
	-1.33	 &  
	 -3.07 &
		-5.22 &
		 -7.70 \\

		 52&
	     -8.60 &
	0.89	 &
	5.95	 &
	8.60	  & 
9.71	 &  
	 9.76 &
		9.04 &
		 7.73 &
		 5.95 \\
		 
		 60&
	     -8.90&
	2.72	 &
	9.24	  & 
	12.97	 &  
	 14.93&
		15.67 &
		15.51 &
		14.69 &
		 13.32 \\

		\hline

	\end{tabular}
	\vspace{0mm}
\end{table*}
\begin{table*}
    \vspace*{-2mm}
	\centering
	\caption{Quantization errors with different lattice constructions.}
	\vspace{0mm}
	\label{tableq}
	\setlength{\tabcolsep}{5pt}
	\def\arraystretch{1.8}
\begin{tabular}{ccccccccc}
		\hline
\textbf{Lattice}& $E_8$&BW$_\mathbf{16}$&Leech&$A^{(3)}_{16}$&$A^{(4)}_{36}$&$A^{(3)}_{52}$&$\text{AC}[\mathcal{H}(5,16)]$&$\text{AC}[\mathcal{H}(6,32)]$\\
\textbf{Quantization error}&
 0.0710&
0.0682&
 0.0656&
 0.0688&
 0.0649&
 0.0643&
 0.0929&
 0.0835\\
		\hline
	\end{tabular}
	\vspace{0mm}
\end{table*}

\subsection{MMSE Decoding}\label{MMSED}
When considering the transmission of analog sources, \ac{MMSE} decoding is optimum as it minimizes the observed distortion. The \ac{MMSE} estimator of the source symbols $\tilde{\mathbf{s}}_{\text{c}}$ from the received symbols $\tilde{\mathbf{y}}_{\text{c}}$ is given by 
\begin{equation}\label{mmsei}
    \hat{\mathbf{s}}_{\text{c}}=\mathbb{E}[\tilde{\mathbf{s}}_{\text{c}} \mid \tilde{\mathbf{y}}_{\text{c}}] =\int \tilde{\mathbf{s}}_{\text{c}}\; p_{\text{s}}(\tilde{\mathbf{s}}_{\text{c}} \mid \tilde{\mathbf{y}}_{\text{c}}) ~d\tilde{\mathbf{s}}_{\text{c}}.
\end{equation}
    By employing the piece-wise definition of the mapping function in \eqref{fpw} and the compact expression for $\tilde{\mathbf{y}}_{\text{c}}$ in \eqref{eq:y_compact}, the conditional probability  $p_{\text{s}}(\tilde{\mathbf{s}}_{\text{c}} \mid \tilde{\mathbf{y}}_{\text{c}})$ can be expressed as $$p_{\text{s}}(\tilde{\mathbf{s}}_{\text{c}}  \mid \tilde{\mathbf{y}}_{\text{c}})\propto \sum_{i=1}^{|\mathcal{L}|}r_i(\tilde{\mathbf{y}}_{\text{c}},\tilde{\mathbf{s}}_{\text{c}}),$$ where
    \begin{equation}
        r_i(\tilde{\mathbf{y}}_{\text{c}},\tilde{\mathbf{s}}_{\text{c}})\propto
        \left\{ \begin{array}{lcc}
             \phi_i\; g(\tilde{\mathbf{s}}_{{c}}\mid \boldsymbol{\mu}_i, \Sigma)  & \text{if} ~\tilde{\B{s}}_{{c}} \in \Omega_{\varLambda}(\B{l}_{i})\\ 
             0 & \text{otherwise}, \\
             \end{array}
             \right.
    \label{eq:sum_truncated}         
    \end{equation}
    with
    \begin{align}
   \notag g(\tilde{\mathbf{s}}_{\text{c}}\mid \boldsymbol{\mu}_i, \boldsymbol{\Sigma})\hspace{-0mm}=&\hspace{-0mm}\Big(\big(2\pi\big)^{Kn}| \boldsymbol{\Sigma} | \Big)^{-1/2} \\&\times\text{exp}\left(\hspace{-0mm}-\frac{1}{2}(\tilde{\mathbf{s}}_{\text{c}} - \boldsymbol{\mu}_i)^T\boldsymbol{\Sigma}^{-1}(\tilde{\mathbf{s}}_{\text{c}}-\boldsymbol{\mu}_i)\right),
    \end{align}

    \begin{equation}
        \phi_i = \text{exp}\left(-\frac{1}{2}\left(\sigma_{\text{n}}^{-2} \parallel \tilde{\mathbf{y}}_{\text{c}}-\tilde{\mathbf{H}}_{\text{c}}\mathbf{D}\mathbf{B}\mathbf{l}_i\parallel^{2}-\boldsymbol{\mu}_i^T\boldsymbol{\Sigma}^{-1}\boldsymbol{\mu}_i \right)\right),
    \end{equation}
    
       \begin{equation}
        \boldsymbol{\mu}_i = \frac{1}{\sigma_{\text{n}}^{2}}\boldsymbol{\Sigma}\mathbf{D}^T\tilde{\mathbf{H}}_{\text{c}}^{T}(\tilde{\mathbf{y}}_{\text{c}}+\tilde{\mathbf{H}}_{\text{c}}\mathbf{D}\mathbf{B}\mathbf{l}_i),
    \end{equation}
     and
    \begin{equation}\label{pao}
        \boldsymbol{\Sigma}=\left( \frac{1}{\sigma_{\text{n}}^2}\mathbf{D}^T\tilde{\mathbf{H}}^{T}_{\text{c}}\tilde{\mathbf{H}}_{\text{c}}\mathbf{D}+\mathbf{C}_{\tilde{\text{s}}}^{-1} \right)^{-1}.
    \end{equation}

\vspace{0.0cm}    
The steps required to obtain this result are similar to those explained in \cite[Appendix A]{Suarez17} for \ac{MMSE} estimation using modulo-like functions but considering mapping functions from $n$-dimensional lattices.
Recall that the \ac{MMSE} integral in \eqref{mmsei} is decomposed into a sum of terms weighted by their corresponding factor $\phi_i$. An important remark is that the function $g(\tilde{\mathbf{s}}_{\text{c}}\mid\boldsymbol{\mu}_i,\boldsymbol{\Sigma})$ actually represents the \ac{pdf} of a truncated multivariate Gaussian with mean $\boldsymbol{\mu}_i$ and covariance matrix $\boldsymbol{\Sigma}$, which is restricted to the corresponding $Kn$-dimensional region given by the aggregate of the $K$ Voronoi regions $\Omega_{\varLambda}(\B{l}_{i})$.
Therefore, we will compute the \ac{MMSE} estimates of the sensor symbols as
\begin{equation}\label{suma_gauss}
    \hat{\mathbf{s}}_{\text{c}}=\frac{\sum_i\phi_i{\Theta}\left( \Omega_{\varLambda} (\B{l}_{i});\boldsymbol{\Sigma},\boldsymbol{\mu}_i\right)}{\sum_i\phi_i
    {\Phi}(\Omega_{\varLambda}(\B{l}_{i});\boldsymbol{\Sigma},\boldsymbol{\mu}_i)},
\end{equation}
where ${\Theta}( \Omega_{\varLambda} (\B{l}_{i});\boldsymbol{\Sigma},\boldsymbol{\mu}_i)=\int\displaylimits_{\Omega_{\varLambda}({\mathbf{l}_{i}})}\;\tilde{\mathbf{s}}_{\text{c}} ~g(\tilde{\mathbf{s}}_{\text{c}}\mid \boldsymbol{\mu}_i,\boldsymbol{\Sigma})~{d}\tilde{\mathbf{s}}_{\text{c}}$ is the mean of a  $Kn$-dimensional multivariate Gaussian truncated to the region given by $\Omega_{\varLambda} (\B{l}_{i})$, and  ${\Phi}( \Omega_{\varLambda}( \B{l}_{i});\boldsymbol{\Sigma},\boldsymbol{\mu}_i)=\int \displaylimits_{ \Omega_{\varLambda} (\B{l}_{i})}\;g(\tilde{\mathbf{s}}_{\text{c}}\mid \boldsymbol{\mu}_i,\boldsymbol{\Sigma})~{d}\tilde{\mathbf{s}}_{\text{c}}$ represents the cumulative distribution of a multivariate Gaussian variable in the aggregated region $\Omega_{\varLambda}( \B{l}_{i})$.

 \Cref{fig:Gaussianas} shows an illustrative example of a bi-dimensional space which is partitioned into four feasible regions $\Omega_{\varLambda}(\B{l}_{i})$ by using lattice-based mappings. For simplicity, we assume rectangular regions but, in the general case, the shape of the truncated regions is given by the corresponding Voronoi regions. The pdf of the truncated Gaussian functions at each region is represented by contour lines with different colors indicating different probabilities.
According to \eqref{eq:sum_truncated}, we have different Gaussian functions (with mean $\boldsymbol{\mu}_i$ and same covariance matrix $\boldsymbol{\Sigma}$) weighted by the factor $\phi_i$. As shown, the maximum value of these functions could fall outside the truncated region due to the channel and noise effects.
Note also that the size of the truncated regions is given by the parameters $\alpha_k$ within the matrix $\B{B}$.
\begin{figure}[htpb]
	\centering
	\includegraphics[width=0.9\columnwidth]{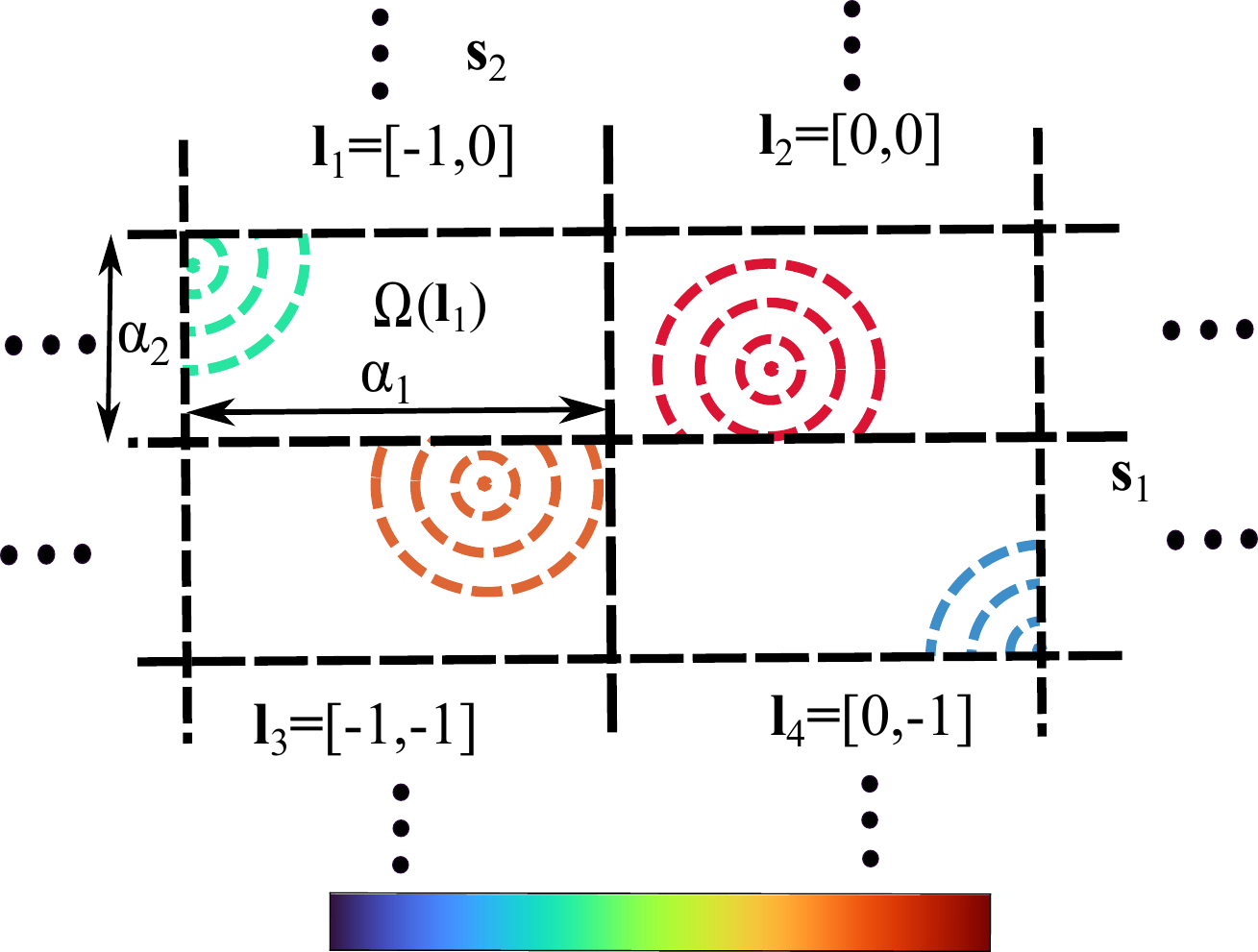}
	\caption{Example of the partition of the source space using bi-dimensional truncated Gaussian variables.}
	\label{fig:Gaussianas}
\end{figure}
\subsubsection{Sphere Decoding with MAP Estimation}
The computation of the \ac{MMSE} estimates with \eqref{suma_gauss} requires to determine the integrals of the truncated Gaussian functions over the $Kn$-dimensional regions defined by $\Omega_{\varLambda} (\B{l}_{i})$. This computation leads to two important problems: 
\begin{itemize}
    \item The number of potential combinations of Voronoi regions increases as the lattice dimension $n$ becomes larger, i.e., the cardinality of the set $\mathcal{L}$ dramatically grows with $n$.
    \item The integration of Gaussian functions over $Kn$-dimensional regions is an extremely difficult problem with an unaffordable computational complexity, even for small values of $K$ and $n$ and considering hyperspheres with radius $p_\text{r}$ as integration regions.    
\end{itemize}

The first problem can be alleviated by considering only those truncated regions $\Omega_{\varLambda}(\B{l}_{i})$ with the largest associated weights $\phi_i$. In such a case, the problem can be formulated as the search of the candidate vectors $\B{l}_i$ such that their corresponding weight factor $\phi_i$ exceeds a given threshold $T$, i.e.,
\begin{equation}
    \phi_i (\B{l}_i)=\text{exp}\left( -\frac{1}{2} \left(  \sigma_{\text{n}}^2 \parallel \tilde{\mathbf{y}}_{\text{c}}+\tilde{\mathbf{H}}_{\text{c}}\mathbf{D}\mathbf{B}\mathbf{l}_i\parallel^2-\boldsymbol{\mu}_{i}^T\boldsymbol{\Sigma} \boldsymbol{\mu}_{i} \right) \right)>T.
    \label{eq:phis}
\end{equation}
Hence, the set of relevant candidate vectors is constructed as
$
    \mathcal{L}_d = \{\B{l}_i \in \mathcal{Z}^{Kn}~|~ \phi_i (\B{l}_i) > T\}.  
$
This problem was already approached for the case of modulo-like mapping functions with Rayleigh channels in \cite{Suarez17}. The proposed solution is based on transforming the $Kn$-dimensional search space into a ``decoding" lattice $\varLambda_{\text{d}}$ whose points correspond to all the possible vectors $\B{l}_i$, and then using a sphere decoder to select those lattice points which fall inside a hypersphere with a particular radius. 

Following an approach similar to that explained in \cite[Appendix B]{Suarez17}, we obtain the following Gram matrix for the decoding lattice 
\begin{equation}\label{latr}
    \mathbf{A}_{\text{d}}=\frac{1}{2}\mathbf{B}^T\mathbf{D}^T\tilde{\mathbf{H}}_{\text{c}}^{T}(\sigma_{\text{n}}^2\mathbf{I}+\tilde{\mathbf{H}}_{\text{c}}\mathbf{D} \mathbf{C}_{\tilde{\text{s}}} \mathbf{D}^T  \tilde{\mathbf{H}}_{\text{c}}^T)^{-1}\tilde{\mathbf{H}}_{\text{c}}\mathbf{D}\mathbf{B},
\end{equation}
and the vector
$$
    \mathbf{l}_0=(\mathbf{B}^T\mathbf{D}^T\tilde{\mathbf{H}}_{\text{c}}^T\tilde{\mathbf{H}}_{\text{c}}\mathbf{D}\mathbf{B})^{-1}\mathbf{B}^T\mathbf{D}^T\tilde{\mathbf{H}}_{\text{c}}^T\tilde{\mathbf{y}}_{\text{c}}
$$
for the center of the sphere where the candidate vectors will be searched for. Therefore, the generator matrix for the decoding lattice $\varLambda_{\text{d}}$ is given by $\mathbf{M}_{\text{d}} = \mathbf{A}_{\text{d}}^{1/2}$. 

According to this alternative formulation \eqref{eq:phis}, the points of the decoding lattices are given by $\mathbf{M}_{\text{d}}\B{l}_i$, and their corresponding $\phi_i$ will increase as the Euclidean distance with respect to $\mathbf{M}_{\text{d}}\B{l}_0$ decreases. Hence, we can build the set of candidate vectors as
\begin{equation}
    \mathcal{L}_d = \{\B{l}_i\in \mathcal{Z}^{Kn} ~|~ || \mathbf{M}_{\text{d}}\B{l}_0 - \mathbf{M}_{\text{d}}\B{l}_i)||^2 < R^2\}.  
\end{equation}
This idea resembles the so-called integer least-square problem where the sphere decoder has been shown to be an effective solution \cite{1194444}. In this case, the application of the sphere decoder to construct the set $\mathcal{L}_d$ is the same as in \cite{Suarez17} but considering the particular lattice structure of $\varLambda_{\text{d}}$, and using the Gram matrix $\mathbf{A}_{\text{d}}$ and the sphere center $\B{l}_0$.

After obtaining the set of candidate vectors $\B{l}_i$ corresponding to those truncated Gaussian regions with a significant weight, the \ac{MMSE} estimates are determined as follows
\begin{equation}\label{suma_gauss2}
    \hat{\mathbf{s}}_{\text{c}}=\frac{\sum\limits_{i=1}^{|\mathcal{L}_d|}\phi_i{\Theta}( \Omega_{\varLambda} (\B{l}_{i});\boldsymbol{\Sigma},\boldsymbol{\mu}_i)}{\sum\limits_{i=1}^{|\mathcal{L}_d|}\phi_i
    {\Phi}(\Omega_{\varLambda}(\B{l}_{i});\boldsymbol{\Sigma},\boldsymbol{\mu}_i)}.
\end{equation}
However, this expression still requires the computation of $2|\mathcal{L}_d|$ integrals of $Kn$-dimensional Gaussian functions over complex truncated regions. To circumvent this problem, we propose to approximate the \ac{MMSE} integrals in \eqref{suma_gauss2} by the corresponding \ac{MAP} estimates for each region of the candidate vectors in $\mathcal{L}_d$. In this particular case, the \ac{MAP} and \ac{MMSE} estimators are not strictly equivalent due to the truncated nature of the conditional posterior probability. However, for an adequate design of the mapping parameters, the \ac{MAP} estimates will be an accurate approximation since the peak values of the truncated Gaussian functions will mostly fall into the corresponding truncated regions. Hence, we can simplify the expression in \eqref{suma_gauss2} as
$
    \hat{\mathbf{s}}_{\text{c}}=\sum_{i=1}^{\mid\mathcal{L}_d\mid} \phi_i  \;\hat{\mathbf{s}}^{\text{MAP}}_{i},
$
 where $\hat{\mathbf{s}}^{\text{MAP}}_{i}$ is the \ac{MAP} estimation for the region $\Omega_{\varLambda_{\text{}}} (\B{l}_{i})$,
which is the solution to the following maximization problem
\begin{align}\label{op1}
\hat{\mathbf{s}}^{\text{MAP}}_{i}=\underset{\tilde{\mathbf{s}}_i}{\text{arg}\;\text{max}}\;\;p(\tilde{\mathbf{s}}_{i}\mid \tilde{\mathbf{y}}_{\text{c}})=\underset{\tilde{\mathbf{s}}_{i}}{\text{arg}\;\text{max}}\;\;\frac{p(\tilde{\mathbf{s}}_{i}\mid \tilde{\mathbf{y}}_{\text{c}})~{p(\tilde{\mathbf{s}}_{i}})}{p(\tilde{\mathbf{y}}_{\text{c}})}, 
\end{align}
where the a priori probability $p(\tilde{\mathbf{s}}_{i})$ is given by \eqref{eq:pdf_source} with a covariance matrix $\B{C}_{\tilde{\text{s}}_c} = \B{C}_{\tilde{\text{s}}} \otimes \mathbf{I}_{\frac{n}{2}}$, the conditional probability is given by
\begin{equation}
    p(\tilde{\mathbf{s}}_{i}\mid \tilde{\mathbf{y}}_{\text{c}}) = 
    \frac{1}{\left(\pi\sigma_{\text{n}}^2\right)^{nK}}
    \exp\left(
           -\frac1{\sigma_{\text{n}}^2}  \Vert \tilde{\B{y}_{\text{c}}} - \tilde{\B{H}}_{\text{c}}\mathbf{D}(\tilde{\mathbf{s}}_{i}-\mathbf{B}\mathbf{l}_i) \Vert^2
    \right),
\end{equation}
and the term $p(\tilde{\mathbf{y}}_{\text{c}})$ can be disregarded as it does not depend on $\tilde{\mathbf{s}}_{i}$. This maximization problem can be reformulated as the minimization of the arguments of the two exponential functions in \eqref{op1} which correspond to the a priori and the conditional probabilities, respectively. Therefore, the \ac{MAP} estimates are determined by solving the following optimization problem
\begin{align}\label{op2}
{\hat{\mathbf{s}}^{\text{MAP}}_{i}}= \; &
\underset{\tilde{\mathbf{s}}_{i}}{\text{arg}\;\text{min}}\;\;\Vert\tilde{\mathbf{y}}_{\text{c}}-\tilde{\mathbf{H}}_{\text{c}}\mathbf{D}( \tilde{\mathbf{s}}_{i}-\mathbf{B}\mathbf{l}_i)\Vert^{2}+\frac{\sigma_{\text{n}}^2}{2}\tilde{\mathbf{s}}_{i}^{T}\mathbf{C}_{\tilde{\text{s}}_{\text{c}}}^{-1}\tilde{\mathbf{s}}_{i} \notag\\ 
& \text{s.t.}~~~\Vert \tilde{\mathbf{s}}_{{i,k}}-\mathbf{B}\mathbf{l}_{i,k}\Vert^{2} \leq p_{\text{r}}(\alpha_k),~\forall k,
\end{align}
where the $K$ constraints in \eqref{op2} are imposed to ensure that the $i$-th \ac{MAP} solution falls into the corresponding truncated region $\Omega_{\varLambda} (\B{l}_{i})$. It is important to remark that these constraints aim at approximating the corresponding actual Voronoi regions which cannot be defined analytically for an arbitrary dimension.
As observed, the Euclidean distance between the solution vector for each sensor $\tilde{\mathbf{s}}_{{i,k}}$ and the corresponding centroid $\mathbf{B}\mathbf{l}_{i,k}$ must be lower than the covering radius $p_{\text{r}}(\alpha_k)$ of the scaled version of the encoding lattice at sensor $k$.

The problem in \eqref{op2} can be rewritten in a quadratic form as 
\begin{align}\label{op3}
& {\hat{\mathbf{s}}^{\text{MAP}}_{i}} =\;\;
\underset{\tilde{\mathbf{s}}_{i}}{\text{arg}\;\text{min}}\;\frac{1}{2}\tilde{\mathbf{s}}_{i}^T\mathbf{Q}\tilde{\mathbf{s}}_{i}-\mathbf{v}_i^T\tilde{\mathbf{s}}_{i}\\ \notag
& \text{s.t.}\;\;~\Vert \tilde{\mathbf{s}}_{{i,k}}-\mathbf{B}\mathbf{l}_{i,k}\Vert^{2} \leq p_{\text{r}}(\alpha_k),~\forall k,
\end{align}
where $$\mathbf{Q}=2\mathbf{D}^T\tilde{\mathbf{H}}_{\text{c}}^T\tilde{\mathbf{H}}_{\text{c}}\mathbf{D}+\sigma_{\text{n}}^2\mathbf{C}_{\tilde{\text{s}}_{\text{c}}}^{-1}$$ and $$\mathbf{v}_i=2\mathbf{D}^T\tilde{\mathbf{H}}_{\text{c}}^T(\tilde{\mathbf{y}}_{\text{c}}+\tilde{\mathbf{H}}_{\text{c}}\mathbf{D}\mathbf{B}\mathbf{l}_i).$$ This problem is a variant of a quadratically constrained quadratic program (QCQP) which can be solved efficiently by convex optimization techniques.

\subsubsection{Choice of the Sphere Decoder Radius} In sphere decoding,
there is a trade-off between decoding complexity and estimation accuracy that can be adjusted by means of the sphere radius $R$. If $R$ is too large, there will be too many candidates inside the search hypersphere which leads to an intractable complexity. However, if $R$ is too small, there will be no points inside the sphere. A reasonable guess for $R$ is the covering radius of the lattice which constitutes the smallest radius of the spheres centered at the lattice points that cover the entire space (without holes). This approach guarantees the existence of at least one point inside the sphere \cite{1468474}. However, determining the covering radius for a given lattice is itself hard. Therefore, we need to use an alternative strategy to optimize the value of $R$.

Let $\B{l}^*$ denote the true vector used to encode the source symbols $\tilde{\B{s}}_c$, i.e., $d_l = ||\mathbf{M}_{\text{d}}\B{l}_0 - \mathbf{M}_{\text{d}}\B{l}^* ||^2$ follows a chi-square distribution $\mathcal{X}^2$ with $Kn$ degrees of freedom \cite{1468474}, i.e., $d_l \sim \mathcal{X}^2_{Kn}$. Using this result, we can ensure that the optimum lattice point will fall inside the hypersphere with center $\B{l}_0$ and radius $R$ with a probability $1-\epsilon$ as long as
\begin{equation}
    R^2 \geq F^{-1}_{\mathcal{X}^2_{Kn}}(1-\epsilon),
    \label{eq:radius_sdecoder}
\end{equation}
where $F_{\mathcal{X}^2_{Kn}}(\cdot)$ represents the cumulative distribution function of a chi-square variable with $Kn$ degrees of freedom. Therefore, the $\epsilon$ parameter should be set to a value close to zero to guarantee that the optimum vector is obtained by the sphere decoder with high probability. We have checked experimentally that the criterion in  \eqref{eq:radius_sdecoder} provides a good trade-off for $\epsilon \approx 10^{-5}$. 

In any case, the choice of the radius $R$ is not critical in terms of the system performance, since if no candidates are found for a given $\epsilon$, $R$ can be increased and the sphere decoder is applied again with the new value. Conversely, the value of $R$ does impact the computational cost of the decoding phase. It is thus important to prevent the use of excessively large $R$ values in the decoding operation. 
\subsection{Parameter Optimization}\label{Sub4}
The optimization of the mapping parameters $\{\alpha_k,\delta_k\}$ is fundamental to achieve good performance. Recall that $\alpha_k$ determines the distance between the lattice points at the $k$-th sensor, whereas $\delta_k$ parameters correspond to the power factors employed to satisfy the transmit power constraints. Reducing $\alpha_k$ decreases the size of the associated Voronoi regions and thus the norm of the encoded vectors as they are given by the difference vector between the source vectors and their centroids. 
This in turn impacts on the power factors $\delta_k$ which can be determined as $\delta_k = \sqrt{{P_{\text{T}}}_k / e_k(\alpha_k)}$, where $e_k(\alpha_k)$ is the resulting quantization error for the $k$-th sensor. Note that $e_k(\alpha_k)$ will decrease when lowering $\alpha_k$, which will allow to use larger $\delta_k$ values for a given power constraint ${P_{\text{T}}}_k$. In this way, the optimization procedure will focus on selecting an adequate value of the scaling factors ($\alpha_k$) at each transmitter since the corresponding power factors ($\delta_k$) can be computed subsequently from $\alpha_k$ and the available transmit power $P_{\text{T}k}$.

On the other hand, it is worth remarking that large values of $\delta_k$ reduce the symbol distortion at reception since the error covariance matrix $\boldsymbol{\Sigma}$ in \eqref{pao} inversely depends on $\mathbf{D}=\text{diag}(\delta_1,\ldots,\delta_K)\otimes\mathbf{I}_{n}$. This is clear in the ideal situation where one unique candidate $\B{l}_i$ has a significant weight $\phi_i$ and no decoding ambiguities occur. However, when using too small $\alpha_k$ values, the fading \ac{MAC} and the noise will cause decoding ambiguities which will severely degrade the system performance. 

These ideas can be summarized in the following two points:
\begin{itemize}
    \item Decreasing the value of $\alpha_k$ implies that the lattice points will be closer to each other, the Voronoi regions will be ``smaller", and hence the quantization error will also be smaller. As a result of all this, the norm of the encoded vectors will be lower and the resulting power factor $\delta_k$ will be larger. Thus, we can scale the encoded vectors by a large factor for the same transmit power. According to equation \eqref{pao}, the estimation error will hence be smaller as the covariance matrix $\boldsymbol{\Sigma}$ inversely depends on the power factors $\delta_k$. This statement is true as long as the right vector candidate $\mathbf{l}^*$ is chosen in the decoding procedure with the sphere decoder. 
    \item Decreasing the value of $\alpha_k$ implies that the lattice points will be closer to each other, and hence either the channel distortions or the noise can move the source symbols away from their corresponding centroids or even to other Voronoi regions. The smaller the distance between the lattice points, the greater the probability of this situation happening. This can have a negative impact on the decoding operation, either creating ``ambiguities" (two or more candidate vectors with similar probability) or causing decoding to fail. 
\end{itemize}

Therefore, when optimizing $\alpha_k$, an adequate trade-off is essential to minimize the system distortion. The intuitive idea is to use the minimum possible $\alpha_k$ values which minimize the probability of decoding ambiguities. Recall that the decoding ambiguities are caused by the presence of several candidate vectors $\B{l}_i$ with relevant and similar weights. In the alternative lattice-based formulation, this implies that there are several points in the decoding lattices with a similar distance to the center point given by $\B{l}_0$. Therefore, an adequate criterion for the optimization of $\alpha_k$ is to guarantee that the separation among the points in the decoding lattice ${\varLambda}_d$ is larger than a certain threshold $S$, i.e.,
\begin{equation}
    || \mathbf{M}_{\text{d}}\B{l}_i - \mathbf{M}_{\text{d}}\B{l}_j||^2 \geq S, ~~~ \forall ~\B{l}_i \neq \B{l}_j.
    \label{eq:opt_criterion}
\end{equation}
Note that the lattice expression in \eqref{latr} depends on $\B{B}$ which includes the diagonal matrix $\mathbf{U}=\text{diag}(\alpha_1,\ldots,\alpha_K)$. This way, we can formulate an iterative procedure similar to \cite{Suarez17} which alternatively updates the values of $\alpha_k$ and $\delta_k$ until the criterion in \eqref{eq:opt_criterion} is satisfied. The proposed optimization procedure requires the knowledge of the different channel responses since we need such information to construct the generator matrix $\mathbf{M}_{\text{d}}$ for the decoding lattice.

Finally, an appropriate value for the threshold $S$ should be selected. A conservative value is $S=2R^2$ to ensure that the probability of finding other lattice points at a distance equal or smaller than to the optimum one will be negligible. As shown in the previous subsection, the probability that the distance between the optimum vector $\mathbf{l}^*$ and the center point $\mathbf{l}_0$ is greater than $R^2$, i.e., $d_l > R^2$, is below $\epsilon$ according to \eqref{eq:radius_sdecoder}. By selecting the sphere radius with small $\epsilon$ values (e.g. $\epsilon = 10^{-5}$), the distance between the lattice point corresponding to the optimum vector $\mathbf{l}^*$ and the point corresponding to $\mathbf{l}_0$ in the decoding lattice will be smaller than $R^2$ with very high probability. This implies that the distance from $\mathbf{l}_0$ to any other lattice point will be greater than $R^2$ as the distance between two lattice points is at least $2R^2$. Therefore, the probability of decoding ambiguities vanishes when selecting $S=2R^2$. However, we have experimentally observed that this threshold can be reduced even to $S \approx R^2$ without causing detrimental decoding ambiguities.         
\section{Simulation Results}\label{SR}
In this section, we present the results of computer simulations carried out to evaluate the performance of the proposed lattice-based analog \ac{JSCC} system in a $K\times N_\text{r}$ \ac{SIMO} \ac{MAC} scenario. As mentioned, our focus is on \ac{WSN} scenarios with low latency requirements, and hence we will consider significantly small block sizes compared to traditional digital systems.   

At each time instant, the vector of $K$ source symbols is generated from a zero-mean multivariate circularly symmetric Gaussian distribution with covariance matrix  $\B{C}_{\B{s}}$. We assume a correlation model where the source symbols are normalized and the cross-correlation between any two symbols is the same, i.e.,  $[\B{C}_{\B{s}}]_{i,i} = 1 ~\forall i = 1, \ldots, K$ and $[\B{C}_{\B{s}}]_{i,j} = \rho ~\forall i \neq j$. This assumption is adopted for simplicity. However, we highlight that the correlation exploitation in the decoding stage is independent of the correlation model. As mentioned, blocks of $n$ source symbols are encoded at each sensor via an analog \ac{JSCC} lattice-based mapping where the parameters are properly optimized as in \Cref{Sub4} to avoid decoding ambiguities. %
The resulting encoded symbols are sent to the central node over a fading \ac{SIMO} \ac{MAC}, where the channel coefficients 
follow a Rayleigh distribution. The fading channel response is assumed to remain constant during the transmission of $B_{\text{s}}$ blocks of $n$ source symbols.  
At the central node, the vector $\tilde{\mathbf{y}}_{\text{c}}$, with all the received symbols corresponding to a block, is employed to estimate the source symbols of all sensors with the help of the sphere decoder. This simulation procedure is repeated for $C_{\text{R}}$ different channel realizations. %

The system performance is evaluated in terms of the \ac{SDR} obtained for a given range of \ac{SNR} values. The \ac{SDR} is defined as $\text{SDR (dB)} = 10\log_{10}(1/\hat{\xi})$,
where
\begin{align}
\hat{\xi} = \frac1{C_{\text{R}}B_{\text{s}}nK}\sum_{l=1}^{C_{\text{R}}} \sum_{j=1}^{B_{\text{s}}n} \sum_{k=1}^K |s_{k,j,l} - \hat{s}_{k,j,l}|^2
\end{align}
is the average MSE between the source and the estimated symbols. Hence, the SDR is a suitable metric to illustrate the reliability in the transmission of the information measured/acquired by the sensors.
For simplicity, we assume that the available power at the $K$ sensors is the same, i.e., ${P_{\text{T}}}_k = {P_{\text{T}}}, ~\forall k$, whereas the noise component is $\sigma_{\text{n}}^2 =1$. Therefore, the system \ac{SNR} is $\text{SNR}\;\text{(dB)} = 10\log_{10}({P_{\text{T}}})$. 
\subsection{Performance Evaluation of the Lattice-Based Analog \ac{JSCC}} The following lattices have been considered in the simulation experiments carried out:
\begin{enumerate}

  \item Craig's lattices with dimensions $n \in\{ 16, 36, 52\}$ and constructed as in \Cref{sub:alter_const}. The parameter $m$ is chosen to obtain the densest possible lattices for each $n$ with an affordable encoding computational cost.
    For dimensions up to $36$, we consider the densest Craig's lattices, i.e., $m=3$ for $n=16$ and $m=5$ for $n=36$ whereas for $n=52$, $m=3$ is considered since the densest lattice in this high dimension leads to impractical computational complexity in the encoding process. 
	\item The Leech lattice ($n=24$) \cite{cohn2009optimality,borcherds1985leech,conway2013sphere}.
	\item The Barnes-Wall $(\text{BW}_{16})$ lattice ($n=16)$ \cite{conway2013sphere}.
	\item The $E_8$ lattice ($n=8$) \cite{conway2013sphere,5962670}.
\item The bi-dimensional hexagonal lattice ($n=2$) \cite{9097147}.
\item The modulo-like mappings ($n=1$) \cite{5962670,Suarez17}.
	\end{enumerate}

The lattices corresponding to the construction A have been disregarded because of their poor performance as was anticipated by their quantization errors (see \Cref{tableq}). In addition to the above schemes, two performance bounds were considered as benchmarks. One results from ``uncoded" transmission which provides the best performance achievable assuming a zero-delay linear strategy. In uncoded transmission, each  source symbol is multiplied by a complex-valued scalar to exploit both the channel information and the spatial correlation while satisfying the individual power constraints (c.f. \cite{Suarez18b}). The other is the \ac{OPTA}, which corresponds to the best performance achievable by any communication system designed according to the separation principle. The \ac{OPTA} can be determined by equating the source rate-distortion region and the capacity region of the \ac{MAC} \cite[Appendix C]{Suarez17}. Finally, we have also considered a non-lattice JSCC scheme based on channel optimized vector quantization (COVQ) \cite{farvardin1991performance} to compare the performance of the proposed JSCC lattice-based system with another state-of-the-art encoding strategy. In particular, we have adapted the Linde-Buzo-Gray (LBG) algorithm \cite{linde1980algorithm} to produce an optimized channel codebook for the considered WSN SIMO MAC scenario.

\begin{figure}[t!]
	\centering
	\vspace{-2mm}
	\includegraphics[width=0.93\columnwidth]{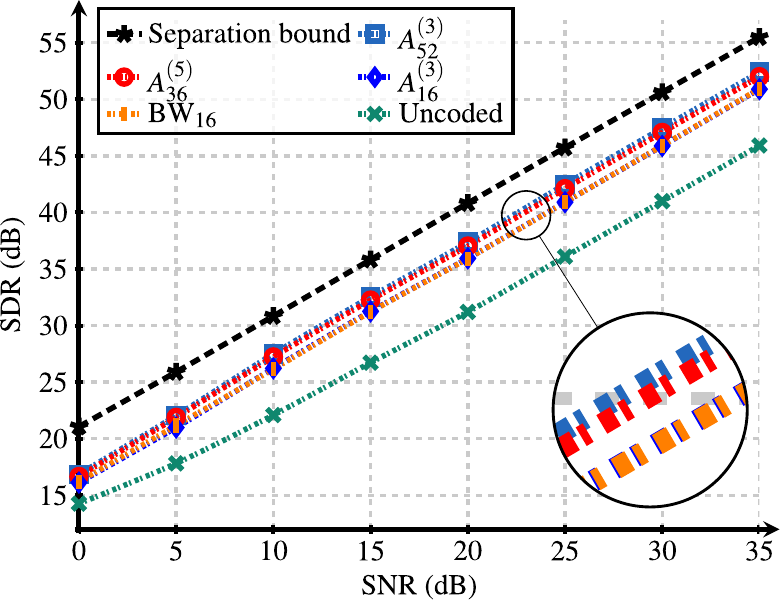}
\vspace{0mm}	\caption{SDR (dB) for different sizes of a Craig's lattice-based mapping ($ n \in \lbrace 16, 36, 52  \rbrace$) with $m=3$  for $n=16$, $m=5$  for $n=36$ and $m=3$ for $n=52$, respectively, and for the $(\text{BW}_{16})$ lattice in a $4\times20$ \ac{WSN} \ac{SIMO} \ac{MAC} setup with $\rho=0.95$.}
	\label{fig1}
	\vspace{0mm}
\end{figure}

\Cref{fig1} plots the \ac{SDR} obtained in a $4\times 20$ \ac{SIMO} \ac{MAC} with correlated sources ($\rho=0.95$). This first experiment was set up to analyze the performance obtained with Craig's lattices of increasing block size. Three Craig's lattices with dimensions $n\in\{{16,36,52}\}$ were considered. According to \Cref{tablerp}, the values of $m$ which lead to the densest Craig’s lattices are $m \in \{3,5,7\}$, respectively. %
Note that we have taken the optimal values of $m$ for $n=16$ and $n=36$ while for $n=52$, $m=3$ was selected because is the largest value leading to a reasonable encoding cost for the considered system. We also remark that the proposed alternative construction makes it affordable to perform Craig's lattice encoding for larger values of $m$ when $n=36$, and for some values of $m$ when $n=52$.         

As observed in \Cref{fig1}, the best performance is obtained by the Craig's lattice with the largest dimension $n=52$. This is a very interesting result as it shows that the transmission reliability is improved when increasing the codeword size in spite of not using the best packing lattices. \Cref{fig1} also shows that Craig's lattices are a good choice for encoding since the lattice $A_{16}^{(3)}$ provides the same performance as $\text{BW}_{16}$ which is the densest lattice for $n=16$ \cite{conway2013sphere}. Finally, Craig's lattices allow to reduce the gap of the linear approaches (uncoded transmission) w.r.t. the \ac{OPTA} from 10 dB to only 2 or 3 dB. Nevertheless, note that \ac{OPTA} is actually an optimistic upper bound since infinite block length is assumed for the source and channel encoders, and the constraints for the individual rates are disregarded. %

\begin{figure}[t!]
	\centering
	\vspace{-2mm}
	\includegraphics[width=0.93\columnwidth]{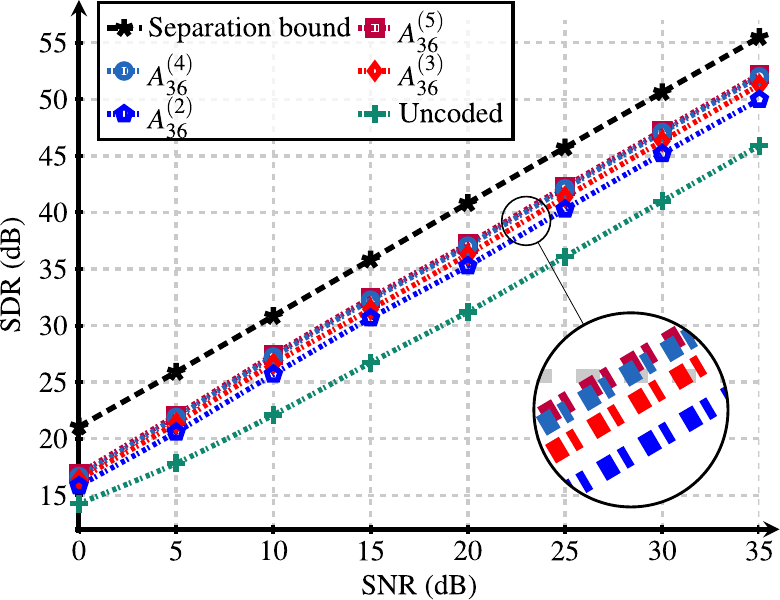}
\vspace{0mm}		\caption{SDR (dB) obtained with  $m\in \lbrace2,3,4,5\rbrace$ by considering a Craig's lattice-based mapping (with size $n=36$) in a $4\times 20$ \ac{WSN} \ac{SIMO} \ac{MAC} setup with spatial correlation $\rho=0.95$.}
\vspace{0mm}
	\label{fig2}
\end{figure}

Next, we evaluated the impact of the lattice density on the system performance. \Cref{fig2} shows the \acp{SDR} achieved in the same communication scenario as before when considering Craig's lattice-based mappings for $n=36$ and $m\in \lbrace2,3,4,5\rbrace$. %
As observed, the \ac{SDR} improves with the lattice density. Indeed, the best performance is achieved for $m=5$, i.e., the densest ${A}_{36}^{(5)}$ lattice provides the highest SDR. This result illustrates the importance of optimizing the lattice density for a given codeword size $n$. In any case, the gain obtained when moving from $m=4$ to $m=5$ is minimum because the increase of the lattice density is also relatively small (see \Cref{tablerp}). We have also observed that the system performance starts to decrease for $m$ values above the optimal one $m_0$.

\begin{figure}[t!]
	\centering
	\vspace{0mm}
	\includegraphics[width=0.93\columnwidth]{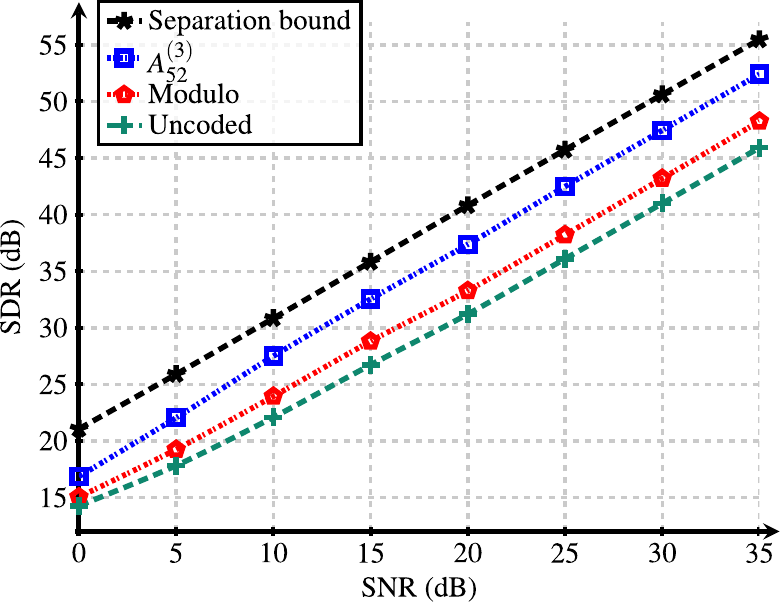}
\vspace{0mm}	\caption{SDR (dB) obtained with different analog lattice-based mappings in a $4\times 20$ \ac{WSN} \ac{SIMO} \ac{MAC} setup with correlation $\rho=0.95$.}
	\label{fig3}
	\vspace{0mm}
\end{figure}

 \begin{table*}
	\centering
	\caption{$\text{SDR}\;(\text{dB})$ obtained with the different \ac{JSCC} mappings for different \acp{SNR} and  $\rho=0.95$.}
	\label{table1}
	\setlength{\tabcolsep}{5.5pt}
	\def\arraystretch{1.6}
\begin{tabular}{ccccccccccc}
		\hline
		$\mathbf{{SNR}\;(\text{dB})}$&
	    \textbf{Sep. Bound}  &
		$\mathbf{{A}_{52}^{(3)}}$ &
		$\mathbf{{A}_{36}^{(4)}}$ &
		\textbf{Leech}  & 
		$\mathbf{{BW}_{16}}$ &  
		$\mathbf{{E}_{8}}$  &
		\textbf{Hexagonal} &
	\textbf{	Modulo}  &
	\textbf	{Uncoded } &
	\textbf	{$\textbf{COVQ}_8$}\\

				0&
	     21.07 &
	16.86	 &
	16.66	 &
	16.39	  & 
	16.23	 &  
	 16.06 &
		15.55 &
		 15.05 &
		 14.26 & 16.77\\

		 	5&
	     25.90 &
	22.05	 &
	21.88	 &
	21.52	  & 
	21.17	 &  
	 20.76 &
		19.97 &
		 19.23 &
		 17.82  & 21.87\\

		 10&
	     30.84 &
	27.50	 &
	27.23	 &
	26.64	  & 
	26.15	 &  
	 25.77 &
		24.71 &
		 23.95 &
		 22.09 & 26.70\\
		 
		 15&
	     35.82 &
32.54	 &
	32.25	 &
	31.61	  & 
	31.21	 &  
	 30.76 &
		29.68 &
		 28.81 &
		 26.72 & 31.52 \\

		  20&
	     40.82&
37.37	 &
	37.03	 &
	36.33	  & 
	35.92	 &  
	 35.38 &
		34.24 &
		 33.28 &
		 31.21 & 36.35 \\

		 25&
	     45.73&
42.49	 &
	42.10	 &
	41.31	  & 
	40.89	 &  
	 40.32 &
		39.11 &
		 38.21 &
		 36.10 & 41.17 \\
		 
		 30&
	     50.62&
47.46	 &
	47.07	 &
	46.21	  & 
	45.90	 &  
	 45.24 &
		44.12 &
		 43.21 &
		 40.99 & 45.91\\

		 	 35&
	     55.48&
52.42	 &
	52.00	 &
	51.25	  & 
	50.93	 &  
	 50.18 &
		49.06 &
		 48.25 &
		 45.91 & 50.69 \\
		\hline

	\end{tabular}
	\vspace{0mm}
\end{table*}

\Cref{table1} shows the \ac{SDR} values obtained with the proposed lattice-based analog \ac{JSCC} approach when using different block sizes (and delays) for the encoding of correlated sources with $\rho=0.95$ in a fading $4\times 20$ \ac{SIMO} \ac{MAC} system. We consider the best packing lattices for each dimension until $n=24$ and two different Craig's lattices for $n=52$ and $n=36$, namely ${A}_{52}^{(3)}$ and ${A}_{36}^{(4)}$. As seen in the previous experiments, these Craig's lattices provide an appropriate balance between performance and computational cost.  We have also included the results obtained for the OPTA bound, the linear system based on uncoded transmission and the non-lattice JSCC scheme based on COVQ with the LBG algorithm. This latter scheme was implemented by assuming dimension $n=8$ as larger block sizes lead to a prohibitive computational complexity because of the huge number of centroids required for a fair comparison with the proposed scheme.  
In addition, \Cref{fig3} illustrates the system performance for the ${A}_{52}^{(3)}$ mapping, the modulo-like mapping, the uncoded transmission and the \ac{OPTA} bound. In the figure, we can appreciate an \ac{SDR} gain of about 2 dB by using the optimized modulo-like mappings instead of the uncoded transmission. This gain is due to the non-linearity of the modulo functions which makes them more suitable for the zero-delay transmission of correlated sources \cite{Wernersson09}. The improvement w.r.t. the modulo-like mappings when using the Craig's lattice ${A}_{52}^{(3)}$ is significantly larger (around $5$ dB at high \ac{SNR} values) which is due to the utilization of a suitable lattice with larger block sizes. Therefore, the proposed lattice-based system is able to improve the reliability of transmission while preserving the delay at a low level.
It is also remarkable that lattice-based analog encoding is able to significantly reduce the gap from the separation bound by assuming practical block sizes in scenarios with low latency requirements, which are significantly smaller than those normally used for digital encoding. Note that the results in \Cref{table1} support these conclusions for all the considered \acp{SNR} and block sizes. In this sense, it is also worth highlighting that COVQ-based scheme provides slightly better performance that its counterpart $E_8$ lattice. However, the resulting SDR values are below those obtained with the Craig's lattices despite the higher computational cost required to optimize the centroid's distribution for each channel realization.

In order to complete this analysis and evaluate the behaviour of the proposed lattice-based scheme in more practical situations we consider the geometric-based stochastic channel model COST2100 \cite{cost2100}. The channel model parameters have been selected to represent an illustrative transmission of information in a WSN. In particular, SIMO channel realizations are generated according to an scenario ``IndoorHall\_5Ghz" with NLoS, carrier frequency of $5.3$ GHz and channel bandwidth of $20$ MHz. We also consider a $10\times10$ m square room where the 4 transmit nodes are placed at the corners of the room and the central node is located at the center. The central receiver is assumed to be equipped with 20 antennas with a half wavelength separation. Finally, the transmit power is properly adjusted at each node to ensure a certain average SNR at the receiver.  

\Cref{figCOST} shows the obtained results for the ${A}_{52}^{(3)}$ Craig mapping, the modulo-like mapping, the uncoded transmission, and the OPTA bound. As observed, the behavior of the different analog JSCC schemes quite resembles the one obtained with Rayleigh channels. The performance gain resulting from the use of lattice-based schemes with larger codewords is similar and the gap with respect to the OPTA bound remains around 3 or 4 dB, although the performance of all considered schemes becomes worse for all the SNR values. This is due to the fact that the channel realizations for the selected COST2100 channel model present larger attenuation and lower spatial diversity than in the Rayleigh case. This effect is particularly visible for low SNR values, where the gain provided by increasing the codeword size is minimum.          

\begin{figure}[t!]
	\centering
	\vspace{0mm}
	\includegraphics[width=0.96\columnwidth]{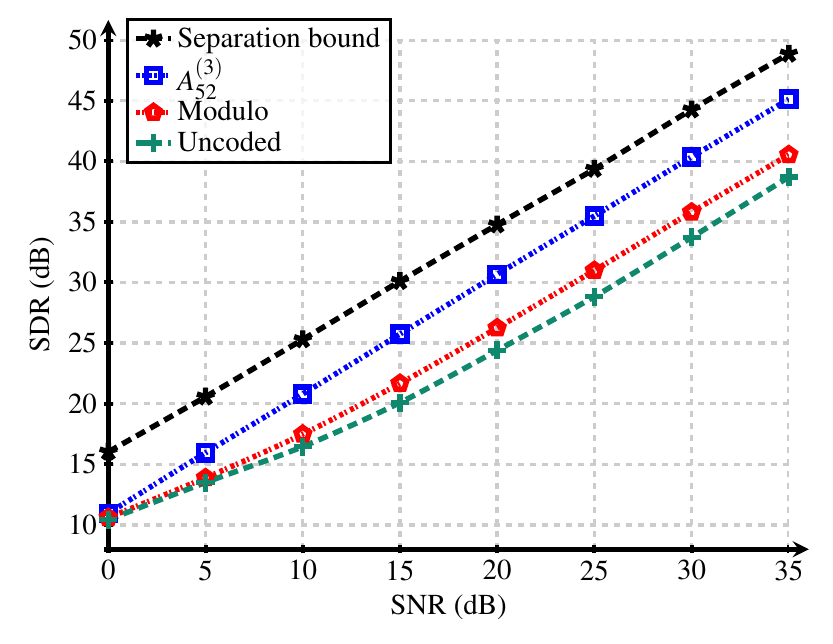}
\vspace{-2mm}	\caption{SDR (dB) for different lattice-based mappings in a $4\times 20$ \ac{SIMO} \ac{MAC} setup with correlation $\rho=0.95$ and assuming the indoor COST2100 channel model.}
	\label{figCOST}
	\vspace{0mm}
\end{figure}

\begin{figure}[t!]
\centering
\vspace{-3.5mm}
\includegraphics[width=0.94\columnwidth]{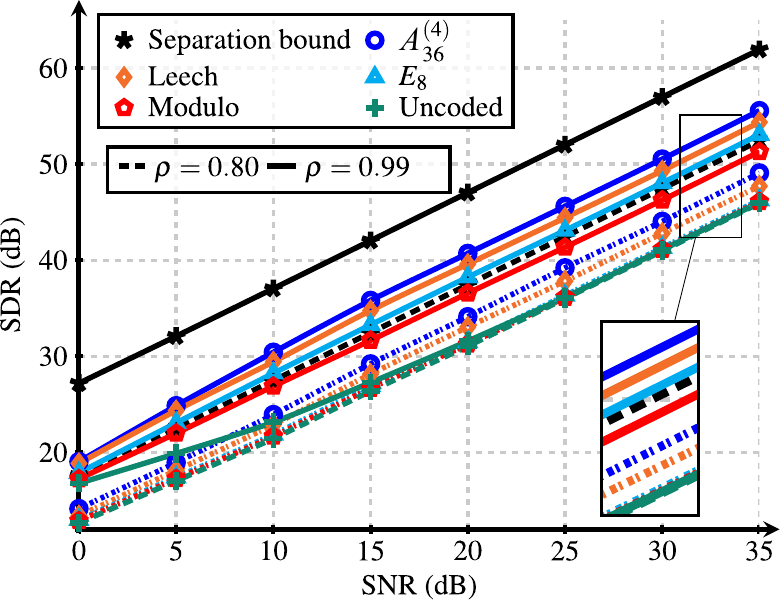}
\vspace{-3mm}
	\caption{SDR (dB) obtained with different analog \ac{JSCC} schemes in a \ac{WSN} \ac{SIMO} \ac{MAC} setup with $N_{\text{r}}=20$, $K=4$ and  $\rho\in \lbrace 0.80, 0.99 \rbrace$.}
	\label{fig4}
	\vspace{0mm}
	\end{figure}
\begin{figure}[ht!]
\begin{minipage}[c]{.49\textwidth}
\centering
\includegraphics[width=0.93\columnwidth]{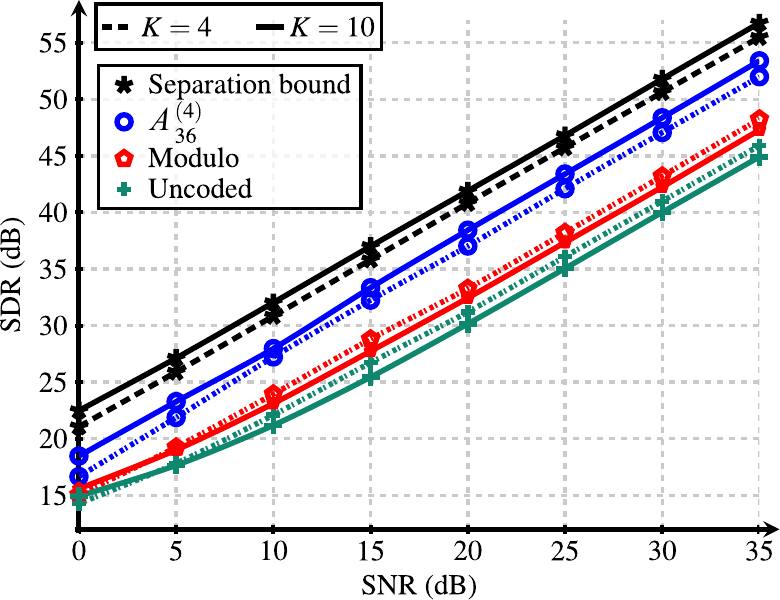}
\vspace{0mm}
	\caption{SDR (dB) obtained with different analog \ac{JSCC} lattice-based mappings in a \ac{SIMO} \ac{MAC} setup with $N_{\text{r}}=20$, $K\in \lbrace 4,10\rbrace$ and $\rho= 0.95 $.}
	\label{fig5}
\end{minipage}
\begin{minipage}[c]{.49\textwidth}
	\centering
	\vspace{4mm}
	\includegraphics[width=0.91\columnwidth]{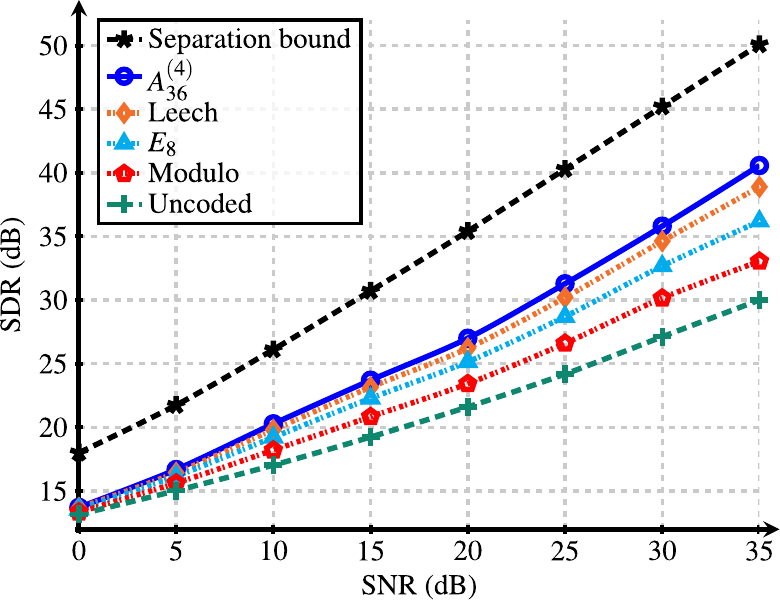}
	\vspace{0mm}
	\caption{SDR (dB) obtained with different analog \ac{JSCC} schemes in a \ac{WSN} \ac{SIMO} \ac{MAC} setup with $N_{\text{r}}=10$, $K=10$ and  $\rho = 0.95 $.}
	\label{fig7}
\end{minipage}
\vspace{0mm}
\end{figure}

We next analyzed performance for different levels of spatial correlation among sensors. \Cref{fig4} plots the \ac{SDR} obtained in a fading $4\times 20$ \ac{SIMO} \ac{MAC} system with two different correlation values, namely $\rho\in \lbrace 0.80, 0.99 \rbrace$, and different analog lattice-based mappings. 
For the lowest correlation level, $\rho=0.8$, modulo-like mappings and $E_8$ lattices provide negligible gain w.r.t. uncoded transmission. In this case, the use of larger encoding blocks is necessary to exploit the spatial correlation among the sources. For example, the lattice $A_{36}^{(4)}$ already achieves an \ac{SDR} improvement of about $4$ dB. This is a remarkable result as one of the major limitations of modulo-like mappings is that they only perform adequately in high correlation scenarios. This behavior changes when the correlation factor becomes larger since the \ac{SDR} gains over uncoded transmission are noticeable even for zero-delay modulo-like mappings. Such gains gradually increase with the block size. On the other hand, the gap of the analog \ac{JSCC} systems w.r.t. the separation bound apparently increases with the sources correlation level, e.g., with ${A}_{36}^{(3)}$, the gap goes from $2$ dB to $4$ dB. This result hence suggests that analog lattice-based \ac{JSCC} would need larger block sizes to efficiently exploit high correlation levels in the source symbols.

The communication scenarios considered in the previous experiments were favorable for zero-delay modulo-based mappings and uncoded transmissions since the receiver had enough degrees of freedom to handle the potential interference caused by the simultaneous transmission of several sensors. 
However, we next analyze if the encoding with larger block sizes can help to mitigate the performance degradation observed for scenarios with higher levels of interference (i.e., less orthogonal). \Cref{fig5} shows the performance obtained for correlated sources with $\rho=0.95$ and two different \ac{WSN} \ac{SIMO} setups: $4\times 20$ and $10\times 20$. %
It is interesting to observe that the system with modulo-like mappings (and uncoded transmission) leads to higher performance for the $4\times 20$ \ac{SIMO} setup than for the $10\times 20$ configuration. Thus, higher levels of interference prevent the system to efficiently exploit the spatial correlation present in the information measured by the sensors, and therefore the performance degrades when there are more sensors in the system. Conversely, the Craig's lattice-based schemes lead to better performance for the $10\times 20$ \ac{SIMO} \ac{MAC} in spite of having fewer degrees of freedom to cancel the sensor interference. This behavior is similar to that of the upper bound which suggests that the separation-based schemes are able to deal properly with the interference while exploiting the higher overall correlation for the $10\times 20$ configuration if the number of sensors increases. This is another relevant result as it allows us to circumvent other of the major limitations of the zero-delay mappings and approximate with small block sizes the behavior of conventional digital separation-based systems.   

In the ensuing experiment, we aimed at providing more insight into the previous issue. We considered an extreme setup with $K= N_\text{r}$, namely a $10 \times 10$ \ac{SIMO} \ac{MAC}. In this case, the ${A}^{(4)}_{36}$ lattice-based mapping, the Leech lattice-based mapping, the $E_8$ construction, and the modulo-like mappings were employed to encode correlated sources with $\rho=0.95$. From the results in \Cref{fig7}, we can derive two important conclusions: 1) the use of non-zero delay mappings provides larger gains w.r.t. zero-delay mappings than in the previous (more orthogonal) configurations, and 2) the gap w.r.t. the \ac{OPTA} is also larger. The first point becomes clear by comparing the performance of modulo-like mappings to that of the ${A}_{36}^{(3)}$ lattice in Figs. \ref{fig5} and \ref{fig7}. The gain of using the Craig's lattices goes from $5$ dB to $8$ dB when considering a $10\times10$ setup instead of a $10\times 20$ one. The second claim is confirmed by comparing the gap between the \ac{SDR} curve for the Craig's lattice and the one for the \ac{OPTA}. This gap goes from $3$ dB to almost $10$ dB when we move to the $10\times10$ setup. This analysis supports that the use of larger block sizes helps to mitigate the impact of high levels of interference. Nevertheless, it would be required to further increase the block size to closely approach the \ac{OPTA} at expense of penalizing the communication delay.

\subsection{Optimization of lattice-based analog \ac{JSCC} system}
In this subsection, we consider some details about the optimization of the proposed lattice-based analog \ac{JSCC} approach and the complexity of the decoding operation.

\subsubsection{Parameter $S$}

As commented in \Cref{Sub4}, the optimization of the mapping parameters is fundamental to achieve adequate system performance. The trade-off between reducing the symbol distortion and avoiding decoding ambiguities when selecting the $\alpha_k$ parameters is managed by the threshold $S$. In \Cref{Sub4}, we provide some insight into an adequate choice of $S$. However, in the following, we experimentally evaluate the accuracy of this choice.

\Cref{figop} shows the \ac{SDR} versus reasonable values for the parameter $S$ when using the Craig's lattice $A_{36}^{(4)}$ and the Leech lattice in a $4\times 20$ \ac{SIMO} \ac{MAC} with $\rho\in\lbrace 0.80,0.95\rbrace$. As observed, the highest system performance is obtained when $S=R^2$ for both lattices, i.e., a proper value for $S$ is to be found in the order of $R^2$. These results hence confirm our initial hypothesis.  
\Cref{figop} also shows that the system performance dramatically degrades when $S < R^2$ because the lattice points are too close to each other leading to inevitable decoding ambiguities. %
This effect is less severe for $S > R^2$, but the resulting system performance is not optimal as we are using too large values for $\alpha_k$ parameters. At the same time, performance seems more sensitive to the adjustment of $S$ for high levels of source correlation. 
\begin{figure}[t!]
	\centering
		\vspace{0mm}
	\includegraphics[width=0.92\columnwidth]{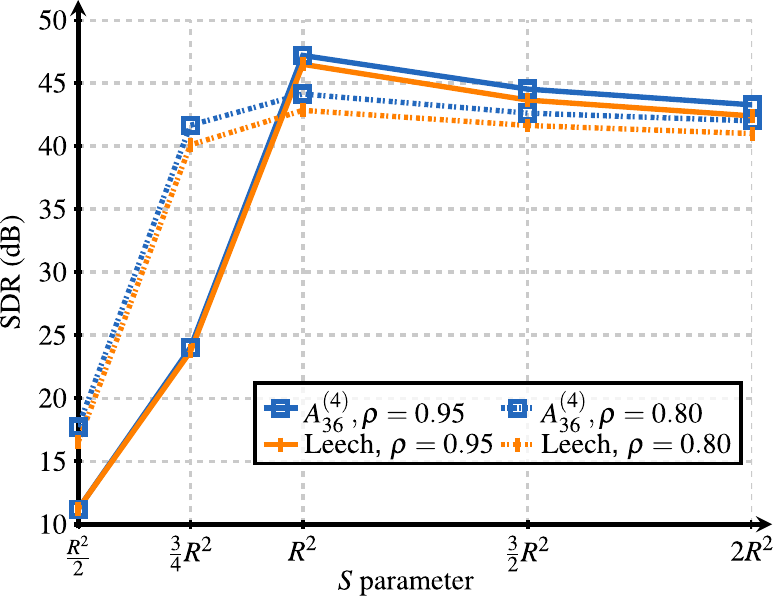}\vspace{0mm}
	\caption{SDR (dB) obtained with different values for the parameter $S$ by encoding with a Craig's lattice-based mapping $({A}^{(4)}_{36})$ and the Leech lattice in a \ac{SIMO} \ac{MAC} setup with $N_{\text{r}}=20$, $K=4$ and  $\rho\in\lbrace 0.80,0.95\rbrace$.}
	\label{figop}
	\vspace{0mm}
\end{figure}

\subsubsection{Parameter $R$} The sphere decoder radius $R$ is set according to \eqref{eq:radius_sdecoder}. The choice of $R$ is less critical than that of $S$ because the sphere decoder can be applied again with a larger $R$ if no candidates are found. However, a proper choice is important to avoid repeating the application of the sphere decoder and to limit the number of candidates falling into the sphere. We have observed from the computer experiments that the criterion in \eqref{eq:radius_sdecoder} is a good choice for $R$. 
\begin{figure}[t!]
	\centering
	\includegraphics[width=0.93\columnwidth]{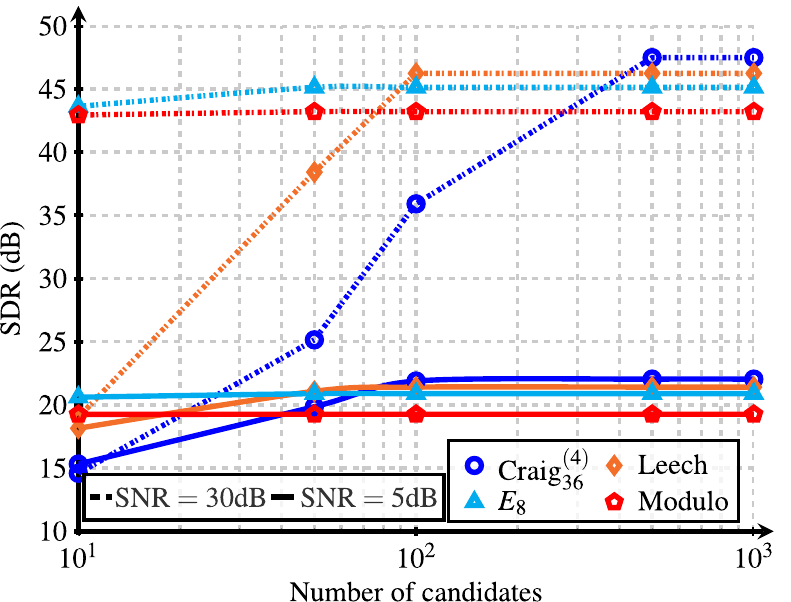}
	\vspace{0mm}
	\caption{SDR (dB) vs maximum number of candidates in the decoding operation with different analog \acp{JSCC} mappings in a  \ac{SIMO} \ac{MAC} setup with $N_{\text{r}}=20$, $K=4$, $\rho= 0.95  $ and  $\text{SNR (dB)}\in \lbrace 5, 30 \rbrace$.}
	\label{fig8}
	\vspace{-2.5mm}
\end{figure}

\subsubsection{Maximum Number of Candidates}
The overall computational complexity of the proposed lattice-based analog \ac{JSCC} system is determined by the encoding and decoding stages. The operations with the highest computational cost are the search for the closest lattice point given the source vector at the encoder and the search for the candidate vectors in $\mathcal{L}_d$ at the decoder.
In this sense, the system was properly designed to get a limited number of candidate vectors with a relevant weight and this, together with the use of \ac{MAP} estimates, reduces the decoding computational cost. However, the iterative nature of the sphere decoder demands the consideration of large numbers of potential candidates in intermediate iterations, especially for high dimensions. In those cases, we should limit the maximum number of candidate vectors at the end of each iteration in the sphere decoder but minimizing the probability of disregarding the optimal vector $\B{l}^*$.   

\Cref{fig8} plots the SDR obtained when varying the maximum number of candidates considered in the decoding phase for different lattice-based mappings in a $4\times 20$ \ac{SIMO} \ac{MAC} with $\rho=0.95$. The system performance obtained when using the Craig's lattice $A_{36}^{(4)}$, the Leech lattice, the $E_8$ construction and the modulo-like mappings are compared for $\text{SNR (dB)}\in \lbrace 5, 30 \rbrace$. As observed, Craig's lattice-based and Leech lattice-based mappings require a higher number of candidates to achieve the best behavior at both SNR levels. This is an expected result because the number of potential combinations of Voronoi regions significantly increases with the lattice dimension $n$.
It is also worth remarking that the impact of excessively limiting the number of considered candidates is less critical in the low \ac{SNR} regime.  
The modulo-like mappings are the simplest ones for decoding since they require the smallest number of candidates to reach their best behavior. Finally, the $E_8$ lattice construction leads to a better performance than that obtained with the modulo-like mapping while exhibiting similar decoding complexity.

\subsection{Analysis of other performance indicators}

In this section, we briefly analyze other relevant parameters to complete the performance evaluation of the proposed lattice-based system. In particular, we will focus on the computational complexity, the required bandwidth, and the communication delay. 

Regarding the computational cost, the lattice-based encoding procedure and the search for the best candidate vectors with the sphere decoder are the two operations with the highest complexity. The remaining operations, including the parameter optimization and the computation of the MAP estimates, clearly have lower computational cost. The approaches to search for the closest centroids in the encoding by using the closest point algorithm and for the candidate vectors in the decoding with the sphere decoder actually follow a similar philosophy. Indeed, both approaches are based on iteratively evaluating the lattice points which fall inside an n-dimensional hypersphere. In this sense, the computational complexity for these searches grows exponentially with the lattice dimension (codeword size) \cite{1019833,1468474}. 

However, different strategies can be applied to mitigate the computational cost corresponding to the standard search on n-dimensional hyperspheres. On the one hand, the maximum number of candidates provided by the sphere decoder at each iteration can be limited depending on the considered SNR value and the lattice type, as shown in the previous subsection. This approach significantly reduces the complexity of the decoding operation as we are restricting the number of points to be evaluated at each sphere decoder iteration. The corresponding computational cost still grows with the lattice dimension since the maximum number of candidates required to provide satisfactory performance must be larger (see \Cref{fig8}), but this approach leads to an acceptable computational complexity at the central node for most applications.   

On the other hand, a similar approach could be considered for the search of the closest lattice point in the encoding procedure. In addition, the use of Craig's lattices provides an additional degree of flexibility to adjust the center density for a given dimension, and hence the number of lattice points to be explored with the closest point algorithm. Moreover, the alternative construction proposed for this type of lattices in this work allows for further lowering the computational cost of the encoding operation. The closest point algorithm is based on recursively representing an $n$-dimensional lattice by $(n-1)$-dimensional parallel translated sub-lattices (layers) \cite{1019833}. In this sense, the dimensionality of the problem can be reduced by ``separating" these layers as this minimizes the number of layers to be explored. It is also desirable that the zero-dimensional layers are as densely spaced as possible \cite{1019833}. %
The use of reduction techniques that guarantee that the scalar product of the the generator matrix columns is as small as possible contributes to satisfy these two conditions  \cite{micciancio2002complexity}.

In summary, the encoding computational complexity is generally in the exponential order with the lattice dimension, but this complexity can be reduced for general lattices by limiting the number of lattice points to be explored. This way, the closest point problem can be solved with an affordable computational cost for the codeword sizes considered in this work. For Craig's lattices, the computational cost of searching the closest lattice point can further be lowered by properly adjusting the center density and constructing a suitable generator matrix.

Another important parameter to assess the system performance is the bandwidth required to transmit the source information. In this case, the required bandwidth is the same regardless of the codeword size or the lattice construction used to encode the source symbols. Assuming that the available channel bandwidth is $B$ Hz, the proposed lattice-based system is designed to transmit $n$ source symbols at $n/2$ channel uses. Therefore, the transmission rate does not depend on the codeword size as it can directly be determined as $v_b=2/T_s$ symbols/s, where $T_s$ corresponds to the channel use duration and  is given by the available channel bandwidth. As a consequence, the spectral efficiency is also equivalent for all the system configurations.   

Finally, the encoding and decoding delays are proportional to the codeword size, or equivalently to the lattice dimension. Note that this delay is minimum compared to the traditional digital systems which use significantly larger codeword sizes. In this way, we can configure the proposed system to achieve different communication reliabilities (SDRs) with minimum impact on the delay, while transmitting the source information at a constant rate.

\section{Conclusions}\label{Con}
This work studied analog \ac{JSCC} lattice-based mappings for the transmission of correlated sources in fading  %
\ac{SIMO} \ac{MAC} systems. These mappings can be particularly suitable for WSNs and IoT systems where it is important to guarantee low communication latency. In these scenarios, the proposed analog \ac{JSCC} scheme allows the use of lattice constructions with different dimensions to encode the sources with variable block sizes. This feature allows for improving the system performance by adjusting the codeword size according to the delay constraints of the application.
At the {central node}, the \ac{MMSE} estimates of the source symbols are computed with the help of {sphere decoding and } \ac{MAP} estimation to reduce the decoding computational complexity. This approach, together with an adequate optimization of the system parameters, enables the transmission of reasonable large block sizes with a good trade-off between performance and computational cost while preserving minimum latency.  
In this sense, Craig's lattices are particularly attractive since they allow for balancing the lattice density to improve transmission reliability with an affordable encoding computational complexity.

Simulation results show that the proposed lattice-based analog \ac{JSCC} system is able to reduce the symbol distortion, and thus increasing the system reliability, as the lattice dimension becomes larger. This improvement is especially remarkable for scenarios with non-orthogonal configurations 
and low levels of spatial correlation. In such scenarios, zero-delay mappings hardly provided some gains with respect to linear approaches. The results also validate the suitability of the proposed Craig's lattices construction for different applications based on low latency \acp{WSN}.
\vspace{0mm}
\bibliographystyle{IEEEtran}
\bibliography{references}

\vfill

\end{document}